\def\arxiv{1}
    \newtheorem{theorem}{Theorem}
    \newtheorem{lemma}[theorem]{Lemma}
    \theoremstyle{definition}
\def\BibTeX{{\rm B\kern-.05em{\sc i\kern-.025em b}\kern-.08emT\kern-.1667em\lower.7ex\hbox{E}\kern-.125emX}}
\pgfplotsset{compat=1.9}
\DeclareMathAlphabet{\mathcal}{OMS}{cmsy}{m}{n}
\newcommand{\ikosview}{\mathcal{V}_{m, n}}
\newcommand{\altshuffler}[1]{\mathcal{A}_{#1}}
\newcommand{\altshufflertwo}[1]{\mathcal{A}_{#1}^2}
\newcommand{\alttwo}{\altshufflertwo{}}
\newcommand{\alttwop}{\altshufflertwo{\pi}}
\newcommand{\G}{\mathbb{G}}
\newcommand{\rv}[1]{\mathsf{#1}}
\newcommand{\tup}[1]{\vec{#1}}
\newcommand{\dom}[1]{\mathbb{#1}}
\newcommand{\mech}[1]{\mathcal{#1}}
\newcommand{\TV}{\mathfrak{T}}
\newcommand{\shuffleCols}{\texttt{ShuffleCols}}
\renewcommand{\TV}{\mathrm{TV}}
\newcommand{\mat}[1]{\mathsf{#1}}
\newcommand{\sk}{\mathbf{sk}}
\newcommand{\pk}{\mathbf{pk}}
\newcommand{\Committee}{C}
\newcommand{\Enc}{\texttt{Enc}}
\newcommand{\committeesize}{n_{\texttt{dec}}}
\newcommand{\polylog}{\texttt{polylog}}
\newcommand{\shufflecommitteesize}{n_{\texttt{shuf}}}
\newcommand{\shuffleclient}{\mathbf{sc}}
\newcommand{\server}{\mathcal{S}}
\newcommand{\Bin}{\mathrm{Bin}}
\newcommand{\borja}[1]{\textcolor{red}{Borja: #1}}
\newcommand{\james}[1]{\textcolor{blue}{James: #1}}
\newcommand{\adria}[1]{\textcolor{purple}{Adria: #1}}
\renewcommand{\borja}[1]{}
\renewcommand{\james}[1]{}
\renewcommand{\adria}[1]{}
\begin{document}

\title{Amplification by Shuffling without Shuffling}
\date{\today}
\author{{\rm Borja Balle} \\ {\rm Google DeepMind} \and {\rm James Bell} \\ {\rm Google} \and {\rm Adrià Gascón} \\ {\rm Google} }

\maketitle

\begin{abstract}
Motivated by recent developments in the shuffle model of differential privacy, we propose a new \emph{approximate} shuffling functionality called Alternating Shuffle, and provide a protocol implementing alternating shuffling in a single-server threat model where the adversary observes all communication.
Unlike previous shuffling protocols in this threat model, the per-client communication of our protocol only grows sub-linearly in the number of clients.
Moreover, we study the concrete efficiency of our protocol and show it can improve per-client communication by one or more orders of magnitude with respect to previous (approximate) shuffling protocols.
We also show a differential privacy amplification result for alternating shuffling analogous to the one for uniform shuffling, and demonstrate that shuffling-based protocols for secure summation based a construction of Ishai et al.~\cite{IKOS} remain secure under the Alternating Shuffle.
In the process we also develop a protocol for exact shuffling in single-server threat model with amortized logarithmic communication per-client which might be of independent interest.

\end{abstract}

\section{Introduction}

The shuffle model of differential privacy (DP) has emerged in recent years as an appealing intermediate between the classical central and local models which enables accurate private computations in distributed settings without transmitting ``plain-text'' data to a trusted aggregator \cite{DBLP:conf/sosp/BittauEMMRLRKTS17,DBLP:conf/eurocrypt/CheuSUZZ19,DBLP:conf/soda/ErlingssonFMRTT19}.
The key building block of the shuffle model is a \emph{trusted shuffler}: a black-box primitive that receives as input a collection of messages submitted by individuals and returns a random permutation of those messages, thus obfuscating their origin and establishing an important contrast with local model protocols where an adversary can track messages back to the individual from whom they originated.
Theoretical protocols leveraging one or more of these primitives have been proposed for a wide range of differentially private computations, including boolean summation \cite{DBLP:conf/eurocrypt/CheuSUZZ19}, real summation \cite{DBLP:conf/crypto/BalleBGN19,DBLP:conf/ccs/BalleBGN20,DBLP:conf/eurocrypt/GhaziMPV20}, histograms \cite{DBLP:conf/icits/BalcerC20,DBLP:conf/eurocrypt/GhaziG0PV21,DBLP:conf/soda/BalcerCJM21,DBLP:journals/corr/abs-2104-02739} and machine learning \cite{DBLP:journals/corr/abs-2001-03618}.
The accuracy of these protocols significantly surpasses the best possible protocols in the local model, and often matches the accuracy of central model mechanisms.
Together with further theoretical work establishing lower bounds and impossibility results (see \cite{DBLP:journals/corr/abs-2107-11839} and references therein), available protocols illustrate the power and practical promise of the shuffle model, and at the same time highlight important separations between the local, shuffle and central models.

A fine-grained separation also arises within the class of shuffle model protocols when one considers the number of messages each user sends through the trusted shuffler.
For example, protocols where each user sends a single message through the shuffler are strictly less powerful that protocols where each user is allowed to send three messages through the shuffler \cite{DBLP:conf/ccs/BalleBGN20,DBLP:conf/eurocrypt/GhaziMPV20}.
This observation has fuelled research into the trade-offs between communication and accuracy in the shuffle model, with the number and size of messages sent per user being used as the main proxy for communication complexity \cite{DBLP:conf/ccs/BalleBGN20,DBLP:conf/icits/GhaziGKMPV20,DBLP:conf/icml/Ghazi0MPS21,DBLP:journals/corr/abs-2104-02739}.
These works also explore, often implicitly, small but important variations in the threat model of multi-message shuffle model protocols ranging from a relying on single trusted shuffler admitting an arbitrary number of messages from each user, to having access to a fixed number of independent trusted shufflers each admitting a single message from each user.
While both communication complexity and threat modelling assumptions are extremely relevant factors for practical applications, the trade-offs involved in translating these results into implementations instantiating concrete trusted shuffler primitives have received significantly less attention.

The seminal work of Bittau et al.\ \cite{DBLP:conf/sosp/BittauEMMRLRKTS17} on the Encode, Shuffle and Analyze framework proposed to instantiate a trusted shuffler using a trusted execution environment (e.g.\ Intel SGX) hosted by an honest-but-curious server who receives the output of the shuffler to perform some analysis.
On the other hand, while formalizing the shuffle model, Cheu et al.\ \cite{DBLP:conf/eurocrypt/CheuSUZZ19} suggest mixnets \cite{chaum1981untraceable} as a potential method for realizing the shuffling functionality.
Bell et al.\ \cite{DBLP:conf/ccs/BellBGL020} propose a secure aggregation protocol for vector summation that can be used to instantiate a cryptographically secure shuffler with linear communication per user.
Their protocol works in the single-server model, where each user can securely communicate with an honest-but-curious powerful server, and provides strong cryptographic guarantees in the presence of an adversary simultaneously corrupting the server and a small fraction of the users.
This is a natural threat model for distributed data analysis tasks where a powerful but untrusted server is tasked with analyzing data distributed across a large number of less powerful devices, and consequently has received significant attention from both theoretical and practical perspectives \cite{DBLP:conf/ccs/BonawitzIKMMPRS17,DBLP:journals/corr/abs-2207-05047, acorn, flamingo, microfedfl, melisetal, DBLP:conf/ndss/ShiCRCS11}.

Inspired by similar questions about the concrete practical efficiency of shuffle model protocols, recent works investigate the use of \emph{approximate} shufflers which might admit more efficient implementations.
Along these lines, Gordon et al.\ \cite{Gordonetal} propose the notion of \emph{differentially oblivious} (DO) shufflers to capture a relaxation of the perfect shuffling assumption where, instead of asking that every possible permutation is equally likely, one requires that distributions over permutations obtained from inputs differing by a single transposition are indistinguishable (w.r.t.\ the same notion of indistinguishability used in differential privacy).
Gordon et al.\ show that DO shufflers can be used to replace perfect shufflers in a restricted class of shuffle model protocols (i.e.\ those based \emph{amplification} of local DP guarantees \cite{DBLP:conf/soda/ErlingssonFMRTT19,BalleBGN20,DBLP:conf/focs/FeldmanMT21}) with only a small degradation of the final privacy guarantees.
They also propose a DO shuffler based on onion routing that works in the single-server threat model with logarithmic communication per client but poor concrete efficiency incurred by the requirement to perform a large number of rounds of communication.
B\"{u}nz et al.\ \cite{DBLP:journals/iacr/BunzHMS21} give an alternative implementation of a DO shuffler, however their protocol requires a trusted third party in a setup phase and is not dropout resistant. Zhou et al.\ \cite{DBLP:journals/iacr/ZhouSCM22} investigate the compositional properties of differential obliviousness, give a tighter analysis of the privacy amplification properties of DO shufflers, and provide examples on how to instantiate some multi-message shuffle model protocols not directly based on amplification using DO shufflers.
Unfortunately, there exist important classes of shuffle model protocols (e.g.\ optimal multi-message protocols for real summation \cite{DBLP:conf/ccs/BalleBGN20, DBLP:conf/eurocrypt/GhaziMPV20}) which so far have not been shown to be realizable using DO shufflers.

In this work we identify \emph{alternating shuffling}, the first approximate shuffling primitive capable of overcoming the limitations of DO shuffling. Broadly speaking, a single round of alternating shuffling approximates the shuffling of $n$ messages by first arranging them into $\sqrt{n}$ rows and columns, and then performing a shuffle across rows followed by a shuffle across columns.
Our work provides an in-depth analysis of the properties of this approximate shuffler with regards to the implementation of protocols in the shuffle model of DP. This shows that Differential Obliviousness is not necessary for amplification.
Furthermore, we propose an implementation of the alternating shuffling primitive that is cryptographically secure in the honest-but-curious single-server threat model, uses sublinear communication per user, requires a small number of rounds, and is resilient to both dropouts and a small fraction of users colluding with the server. We also state some open problems in the analysis of this shuffler and provide and analyze a protocol for single server shuffling (which we call the amortized shuffler) that we built as a partial result, and may be of independent interest.

The rest of the paper is organized as follows.
\Cref{sec:setup} introduces the main functionality and threat model used throughout the paper.
In \Cref{sec:overview} we provide a high-level overview of our contributions, focusing on the key challenges our techniques allow us to address.
Our first set of results revolves around the use of our alternating shuffling functionality to implement distributed DP protocols, including protocols that rely on amplification of local DP guarantees as well as simulating a secure summation protocol using the ideas from \cite{IKOS} -- this is done in \Cref{sec:properties}, where we also survey some interesting open problems.
Finally, \Cref{sec:implementations} presents secure instantiations of approximate shuffling functionalities, together with an analysis of their concrete efficiency.

\section{Setup}\label{sec:setup}

\subsection{Preliminaries}

\paragraph{Differential privacy.}
Two random variables $A$ and $B$ over the same space $Z$ are said to be $(\epsilon, \delta)$-indistinguishable, denoted by $A \simeq_{\epsilon,\delta} B$, if for every $E \subseteq Z$ we have
\begin{align*}
    \max\{\Pr[A \in E] - e^{\epsilon} \Pr[B \in E], \Pr[B \in E] - e^{\epsilon} \Pr[A \in E]\} \leq \delta \enspace. 
\end{align*}
If $p$ and $q$ denote the probability distributions of $A$ and $B$ respectively, we also write $p \simeq_{\epsilon,\delta} q$.
We ignore $\delta$ in this notation whenever it is zero.

A \emph{local randomizer} is a randomized map $R : X \to Y$ from the space of inputs $X$ to the space of messages $Y$. We say that $R$ is $\epsilon_0$-LDP if we have $R(x) \simeq_{\epsilon_0} R(x')$ for all $x, x' \in X$.
A randomized mechanism $M : X^n \to Z$ is $(\epsilon,\delta)$-DP if for any datasets $D, D' \in X^n$ differing in a single element we have $M(D) \simeq_{\epsilon,\delta} M(D')$.

We recall two basic results about amplification of DP guarantees by sampling and shuffling.

\begin{lemma}[\cite{DBLP:conf/nips/BalleBG18}]\label{lem:sampling}
Let $p, q, r$ be distributions such that $p \simeq_{\epsilon_0,\delta} q$ and $p \simeq_{\epsilon_0,\delta} r$. For any $\gamma \in [0,1]$ the mixtures $p' = (1-\gamma) r + \gamma p$ and $q' = (1-\gamma) r + \gamma q$ satisfy $p' \simeq_{\epsilon,\gamma \delta} q'$ with $\epsilon = \epsilon_{\mathrm{sampling}}(\epsilon_0, \gamma) := \log(1 + \gamma (e^{\epsilon_0} - 1))$.
\end{lemma}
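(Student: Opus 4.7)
The plan is a direct pointwise calculation on events: I would reduce the desired indistinguishability of the mixtures to a single inequality involving $p(E)$, $q(E)$, $r(E)$, and then discharge that inequality with a carefully chosen convex combination of the two hypotheses.

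Fix any event $E \subseteq Z$. Expanding $p'(E) = (1-\gamma) r(E) + \gamma p(E)$ and $q'(E) = (1-\gamma) r(E) + \gamma q(E)$, and using the identity $e^{\epsilon} - 1 = \gamma(e^{\epsilon_0} - 1)$ built into the definition of $\epsilon$, a short rearrangement gives
\[
p'(E) - e^{\epsilon} q'(E) \;=\; \gamma\bigl[\,p(E) - e^{\epsilon} q(E) - (1-\gamma)(e^{\epsilon_0}-1)\,r(E)\,\bigr],
\]
so the required bound $p'(E) - e^{\epsilon} q'(E) \leq \gamma\delta$ is equivalent to
\[
p(E) \;\leq\; e^{\epsilon}\, q(E) + (1-\gamma)(e^{\epsilon_0}-1)\,r(E) + \delta.
\]

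The key step is to obtain this bound by blending the two hypotheses. The assumption $p \simeq_{\epsilon_0,\delta} q$ yields $p(E) \leq e^{\epsilon_0} q(E) + \delta$, and $p \simeq_{\epsilon_0,\delta} r$ yields $p(E) \leq e^{\epsilon_0} r(E) + \delta$. Forming the convex combination with weight $\lambda = \gamma + (1-\gamma)\,e^{-\epsilon_0} \in [0,1]$ gives
\[
p(E) \;\leq\; e^{\epsilon_0}\bigl[\,\lambda\, q(E) + (1-\lambda)\, r(E)\,\bigr] + \delta.
\]
A one-line check confirms $e^{\epsilon_0}\lambda = 1 + \gamma(e^{\epsilon_0}-1) = e^{\epsilon}$ and $e^{\epsilon_0}(1-\lambda) = (1-\gamma)(e^{\epsilon_0}-1)$, exactly matching the coefficients on the right-hand side. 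The mirror inequality $q'(E) - e^{\epsilon} p'(E) \leq \gamma\delta$ then follows by running the same argument on the reverse versions of the two hypotheses (bounding $q(E)$ above by the analogous convex combination, with the same weight $\lambda$).

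The main obstacle is spotting the correct $\lambda$: a naive attempt that uses only $p \simeq_{\epsilon_0,\delta} q$ immediately requires $e^{\epsilon_0} \leq e^{\epsilon}$, which is false, so the second hypothesis $p \simeq_{\epsilon_0,\delta} r$ must be used essentially, and its role is precisely to absorb the $(1-\gamma)(e^{\epsilon_0}-1)\,r(E)$ slack term. Once one sees that each hypothesis is responsible for one of the two terms in the target, the value of $\lambda$ is uniquely pinned down by matching coefficients, and everything else is routine algebra.
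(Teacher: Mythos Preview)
The paper does not prove this lemma; it is quoted from \cite{DBLP:conf/nips/BalleBG18} without argument, so there is no in-paper proof to compare against. Your forward direction is clean and correct: the convex combination with weight $\lambda=\gamma+(1-\gamma)e^{-\epsilon_0}$ of the two upper bounds $p(E)\le e^{\epsilon_0}q(E)+\delta$ and $p(E)\le e^{\epsilon_0}r(E)+\delta$ delivers exactly the coefficients $e^{\epsilon}$ and $(1-\gamma)(e^{\epsilon_0}-1)$ you need.

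The reverse direction, however, does not follow ``by running the same argument on the reverse versions of the two hypotheses.'' Reversing the hypotheses gives you $q(E)\le e^{\epsilon_0}p(E)+\delta$ and $r(E)\le e^{\epsilon_0}p(E)+\delta$; to mirror your convex-combination trick you would need a second upper bound on $q(E)$, namely $q(E)\le e^{\epsilon_0}r(E)+\delta$, and that is \emph{not} among the stated assumptions (only $p\simeq_{\epsilon_0,\delta} r$, not $q\simeq_{\epsilon_0,\delta} r$, is assumed). This is a real obstruction, not a cosmetic one: with $\epsilon_0=\ln 2$, $\delta=0$, $\gamma=1/2$ on a two-point space, the choice $p=(1/3,2/3)$, $q=(2/3,1/3)$, $r=(1/6,5/6)$ satisfies both hypotheses, yet $q'_0/p'_0=(5/12)/(1/4)=5/3>3/2=e^{\epsilon}$. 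So either the lemma is meant to carry the extra hypothesis $q\simeq_{\epsilon_0,\delta} r$ (which does hold in the paper's application, by the same shuffling argument used for $p\simeq r$), or only the one-sided bound is being claimed. Either way, your ``mirror'' step needs that missing ingredient made explicit.
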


\begin{theorem}[\cite{DBLP:journals/corr/abs-2208-04591}]\label{thm:uniform-shuffling}
Let $R$ be an $\epsilon_0$-LDP local randomizer. Suppose $n$ and $\delta$ are such that $\epsilon_0 \leq \log\left(\frac{n}{8 \log(2/\delta)} - 1\right)$. Then the protocol obtained by uniformly shuffling $n$ copies of $R$ is $(\epsilon, \delta)$-DP with
\begin{align*}
    \epsilon
    =
    \epsilon_{\mathrm{clones}}(\epsilon_0, \delta, n)
    :=
    \log\left(1 + (e^{\epsilon_0}-1) \left(\sqrt{\frac{32 \log(4/\delta)}{(e^{\epsilon_0}+1) n}} + \frac{4}{n} \right)\right) \enspace.
\end{align*}
\end{theorem}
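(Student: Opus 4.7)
The plan is to follow the ``hiding among the clones'' reduction popularized by Feldman, McMillan and Talwar, in the sharper form developed in the cited reference. Fix neighboring datasets $D = (x_1, x_2, \ldots, x_n)$ and $D' = (x_1', x_2, \ldots, x_n)$ differing only in the first coordinate, and let $\mech{M}(D)$ denote the output of the shuffled protocol on $D$. The goal is to show $\mech{M}(D) \simeq_{\epsilon,\delta} \mech{M}(D')$.

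The first step is a clones decomposition. For every $i \geq 2$, the $\epsilon_0$-LDP property of $R$ provides a convex decomposition of $R(x_i)$ in which a fraction $\gamma = \Theta(1/e^{\epsilon_0})$ of the mass is split evenly between a copy of $R(x_1)$ and a copy of $R(x_1')$, and the remaining mass lies in a residual distribution depending on $x_i$. The existence of such a decomposition is a standard consequence of the pointwise density-ratio bound implied by LDP. Intuitively, each non-sensitive user's output can, with some probability, be simulated from either sensitive input, so after shuffling the sensitive user ``hides among clones'' contributed by the other users.

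Next, since the shuffler outputs an unordered multiset, a post-processing argument reduces distinguishing $\mech{M}(D)$ from $\mech{M}(D')$ to distinguishing the histogram of labels (clone-of-$x_1$, clone-of-$x_1'$, residual) induced by the decomposition. Conditional on the random count $C \sim \Bin(n-1, \gamma)$ of clone slots, the two output distributions differ only in whether user $1$'s message contributes to the clone-of-$x_1$ bucket or to the clone-of-$x_1'$ bucket. The distinguishing problem therefore reduces to bounding the $e^{\epsilon}$-hockey-stick divergence between $\Bin(C+1, 1/2)$ and its unit shift, after marginalizing over $C$.

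To control this divergence, I would apply a Chernoff bound to ensure $C \geq \gamma(n-1)/2$ with probability at least $1 - \delta/2$, absorbing the exceptional event into $\delta$. On the good event, the classical bound on the hockey-stick divergence between a symmetric binomial and its unit shift is of order $(e^{\epsilon_0}-1)/\sqrt{\gamma n}$; optimizing the Chernoff threshold against $\delta$ then yields the advertised $\sqrt{32 \log(4/\delta)/((e^{\epsilon_0}+1) n)}$ factor (using $\gamma = 2/(e^{\epsilon_0}+1)$), with the additive $4/n$ term arising from boundary effects and the contribution of the sensitive user's own randomization. The hypothesis $\epsilon_0 \leq \log(n/(8 \log(2/\delta)) - 1)$ guarantees both that the Chernoff event has probability at least $1 - \delta/2$ and that the leading $\sqrt{\cdot}$ term dominates the correction. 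The main technical obstacle is the tight constant tracking across these two steps, simultaneously optimizing the Chernoff calibration and the binomial divergence bound; this is exactly the refinement carried out in the cited reference to turn the loose $O(\sqrt{\log(1/\delta)/n})$ scaling into the sharp closed form above.
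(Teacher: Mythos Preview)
The paper does not prove this theorem; it is quoted verbatim from the cited reference \cite{DBLP:journals/corr/abs-2208-04591} as a preliminary result, so there is no in-paper proof to compare against. Your sketch is an accurate high-level account of the ``hiding among the clones'' argument that the cited reference actually uses: the LDP-based mixture decomposition with clone probability $\gamma = 2/(e^{\epsilon_0}+1)$, the reduction via symmetry of the shuffler to a one-dimensional distinguishing problem between a symmetric binomial and its unit shift, and the Chernoff calibration of the clone count against $\delta$. Nothing in your outline is wrong, and the constants you identify are the ones that drive the stated bound; the only caveat is that your description remains at the level of a plan, and the genuinely delicate part---which you flag yourself---is the simultaneous tight tracking of constants in the Chernoff step and the hockey-stick bound, which is exactly what the cited paper carries out in full.
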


Note that for any $\epsilon_0 = \log O(n)$ for which the theorem applies we get $\epsilon_{\mathrm{clones}}(\epsilon_0, \delta, n) = O(1)$.

\paragraph{Differentially oblivious shuffling}
A mapping $S: Y^n \to Y^n$ applying a random permutation to its inputs is $(\epsilon,\delta)$-differentially oblivious (DO) if for any two inputs $y, y'$ differing in a transposition (i.e.\ $y_i = y_j'$, $y_j = y_i'$ for some $i \neq j$, and $y_k = y_k'$ for $k \neq i,j$) we have $S(y) \simeq_{\epsilon,\delta} S(y')$. This definition was introduced in \cite{Gordonetal} -- see the reference for a more general version of the definition involving corrupted clients.
Building on this concept, and leveraging privacy amplification by uniform shuffling, Zhou et al.~\cite{DBLP:journals/iacr/ZhouSCM22} show that a $(\epsilon_1,\delta_1)$-DO shuffler can amplify an $\epsilon_0$-LDP local randomizer to provide an $(\epsilon+\epsilon_1, \delta+\delta_1)$-DP protocol with $\epsilon = O((e^{\epsilon_0/2}-e^{-\epsilon_0/2}) \sqrt{\log(1/\delta)/n})$. The result extends to a setting with $t$ corrupted clients by replacing $n$ with $n-t$ and ensuring the DP shuffler maintains its guarantees under that number of corruptions.

\paragraph{ElGamal cryptosystem}
In our implementations we will utilize ElGamal encryption in a group $G$ of order $p$ generated by $g$ within which the decisional Diffie-Hellman assumption holds. For assessing communication costs we assume 256 bit elements. A private key is a random integer $\sk\in \mathbb{Z}_p$ and the corresponding public key is $g^s$. The encryption of a message $m$ (encoded as a group element) is given by $(mg^{\sk r},g^r)$ where $r$ is uniformly random.
Given a ciphertext $(c_1,c_2)$ encrypted with public key $\pk=g^\sk$, anyone with $\pk$ and an integer $a$ can change the key needed to decrypt to $a+\sk$, even without knowing $\sk$, by replacing the ciphertext with $(c_1c_2^a,c_2)$.
This \emph{key homomorphism} property also makes it possible for any $t$ members of a committee holding $t$-out-of-$n$ Shamir shares of a secret key $\sk$ to help another party decrypt a ciphertext, without learning anything themselves and without revealing the key giving the other party the key. We will use this to enable a committee of clients to hold the secret key material and only help the server to decrypt certain encrypted values.
A further advantage of ElGamal encryption that we make heavy use of is the existence of an efficient zero-knowledge proof of correct behaviour for the task of taking a list of ciphertexts, permuting them and re-encrypting them. This proof is due to Bayer and Groth~\cite{BayerG12}, it adds communication overhead that is small compared to the list of ciphertexts and requires on the order of ten exponentiations per ciphertext. This will allow clients to shuffle values for the server whilst preventing any malicious behaviour.

\subsection{Setting \& Threat Model}

In our setting, a single server, denoted $\server$ orchestrates a computation with a large number $n$ of clients, arranged in a star network topology, with $\server$ at the center.
In our protocols, $\server$ acts as a relay between
clients: $\server$ enables key exchange between clients
to establish an authenticated secure channel, and routes subsequent encrypted messages.

As in cross-device Federated Learning~\cite{advances}, we expect clients to be resource-constrained, and possibly have limited connectivity. Therefore, practical protocols must be robust to a reasonable fraction of dropouts, which we denote by $\alpha$. Our security guarantees do not rely on the number of dropouts.

\paragraph{Functionality.}
We assume that every client $i$ holds a private input $x_i$
from a large domain. The functionality implemented by our protocols applies a random permutation (abstractly denoted $\texttt{shuffle}$ below, more details later), and gives the result to the server, as long as the fraction of dropped out clients stays below $\alpha$. More precisely, our shuffling functionalities
are parameterized by a set of dropout clients $D\subseteq [n]$ and the dropout robustness parameter $\alpha$, and the server obtains an output $\Pi_{D, \alpha}(x_1, \ldots, x_n)$ defined as follows:
$$
\Pi_{D, \alpha}(x_1, \ldots, x_n) =
\left\{
	\begin{array}{ll}
		\texttt{shuffle}\big(\{x_i\}_{i\not\in D}\big) & \mbox{if } |D| < \alpha n, \\
		\bot & \mbox{otherwise.}
	\end{array}
\right.
$$

On the other hand, clients involved in the protocol get no output.

\paragraph{Threat model.} Our protocols withstand 
the following adversaries (informally stated). We assume static corruptions, i.e. malicious clients are set before the protocol execution starts, and do not change throughout.
\begin{itemize}
    \item Coalition of $n-1$ malicious clients: A set of up to $n-1$ clients controlled by a polynomial-time adversary and behaving arbitrarily will not learn anything about honest inputs.
    \item Coalition of a semi-honest Server and up to $\gamma$n malicious clients: An adversary simultaneously controlling up to a $\gamma$-fraction of the clients, and observing the server's protocol transcript will not learn anything about the inputs of honest clients.
\end{itemize}
Furthermore, so long as the server and all but a fraction $\alpha$ of the clients follow the protocol honestly (and do not drop out) the server will receive a multiset with one input from each client who did not drop before a set point. 
Our formal security proofs are in the ideal vs.\ real paradigm, using standard simulation-based security in multi-party computation~\cite{goldreich, howtosimulate}.

An important aspect of out threat model is that, since the server enables client-to-client communication, 
it observes the communication pattern. This is 
an important observation when designing secure shuffling functionalities. In particular, the work of Gordon et al.~\cite{Gordonetal}, as well as other solutions based on onion routing, assumes a weaker adversary that does not have access to communication patterns.

\section{Overview of Contributions}
\label{sec:overview}

\paragraph{Towards efficient shuffling in the single-server threat model.}
We are working in a threat model in which there is a single semi-honest server mediating all communications between clients.
Thus, in order to shuffle with a permutation not known to the server (and potentially colluding malicious clients), it would be convenient to have the (honest) clients do the permuting.
If there was one client known to be honest they could just collect encrypted inputs from every client (through the server), and return a shuffle of those inputs to the server.
Since a priori there is no way to identify such a client, instead we could select a number of clients (independent of the total number of clients participating) to do the shuffling, and have them each shuffle in turn -- this approach is reminiscent of the system proposed by Chaum~\cite{chaum1981untraceable}.
To prevent a malicious client acting as a shuffler from replacing some or all of the ciphertexts, a secure instantiation of this scheme requires that client provides a zero-knowledge proof that they have done this correctly.
Fortunately, this is feasible with only a small constant factor in computational overhead thanks to the specialised proof of Bayer and Groth~\cite{BayerG12} and the re-encryption properties of the ElGamal cryptosystem.
If ElGamal keys are generated amongst a single decryption committee, this approach yields a shuffling protocol with $O(n(\sigma+\log(n)))$ total communication.
Details of this protocol are spelled out in Figure~\ref{Protocol:CTShuffle}.

\paragraph{Achieving linear communication with amortized shuffling.}
To reduce the heaviest computation any one client does in the above protocol from $O(n(\sigma+\log(n)))$ to $O(n)$, we propose a novel setup that allows many committees of the clients to generate independent sharings of the same (fresh) ElGamal secret key.
This setup is described in Figure~\ref{Protocol:KeyAgreement}.
Putting this together with the previous protocol gives our Amortized Shuffling protocol (see Figure~\ref{Protocol:Amortized}), which has \emph{amortized} constant communication per client.
Whilst the total communication and number of rounds of this protocol are good, we would like sublinear communication for \emph{all} clients.

\paragraph{Achieving sub-linear communication with alternating shuffling.}
When clients are in charge of ciphertext shuffling, a key idea to reduce the communication required by such clients is to shuffle only a \emph{subset} of ciphertexts each time, iterating until the overall collection of ciphertexts are sufficiently shuffled.
Ideas in the mixnet literature suggest a protocol along these lines could be implemented with $O(\mathrm{polylog}(n))$ communication per client, however the number of rounds this protocol would require would be in the hundreds so we are not satisfied with that approach.
Instead we take an intermediate path that can be described as arranging the ciphertexts in a square matrix, and then shuffling first each row independently, following by transposing the matrix and iterating a number of times.
We call this (approximate) shuffling functionality the \emph{Alternating Shuffler} (denoted by $\altshuffler{\pi, r}^\ell$), which is formally described in Figure~\ref{fig:alternatin-shuffler}. Unless stated otherwise, throughout the paper we assume that $h = w = \sqrt{n}$ -- this gives a protocol with $O(\sqrt{n})$ per-iteration communication per client.
The protocol in Figure~\ref{Protocol:Alternating} provides a secure implementation of such functionality.

\begin{figure*}[ht]
\ifnum\arxiv=0
$
\begin{pmatrix}
    y_{1}\\
     \\ \\
    \vdots \\
     \\ \\
    y_{n}
\end{pmatrix}\xrightarrow[\text{\emph{publicly}}]{\ \text{Shuffle input}\ }
\begin{pmatrix}
    y_{\pi ( 1)}\\
     \\ \\
    \vdots \\
     \\ \\
    y_{\pi ( 1)}
\end{pmatrix}\xrightarrow{\ \text{Rearrange}\ }
\begin{pmatrix}
m_{1,1} & \cdots  & m_{1,w}\\
m_{2,1} & \cdots  & m_{2,w}\\
 &  & \\
m_{h,1} & \cdots  & m_{h,w}
\end{pmatrix}\xrightarrow[\text{and transpose}]{\ \text{Shuffle each row \emph{privately}}\ }
\begin{pmatrix}
    m_{1,\ \pi _{1}( 1)} & \cdots  & m_{h,\ \pi _{h}( 1)}\\
    m_{1,\ \pi _{1}( 2)} & \cdots  & m_{h,\ \pi _{h}( 2)}\\
     &  & \\
    m_{1,\ \pi _{1}( w)} & \cdots  & m_{h,\ \pi _{h}( w)}
\end{pmatrix}\text{~}$\\[.1ex]
$\hspace{8cm}\underbrace{\hspace{.40\textwidth}}_{\text{Repeat $\ell$ times}}$
\else
\begin{center}
       \includegraphics[width=0.9\textwidth]{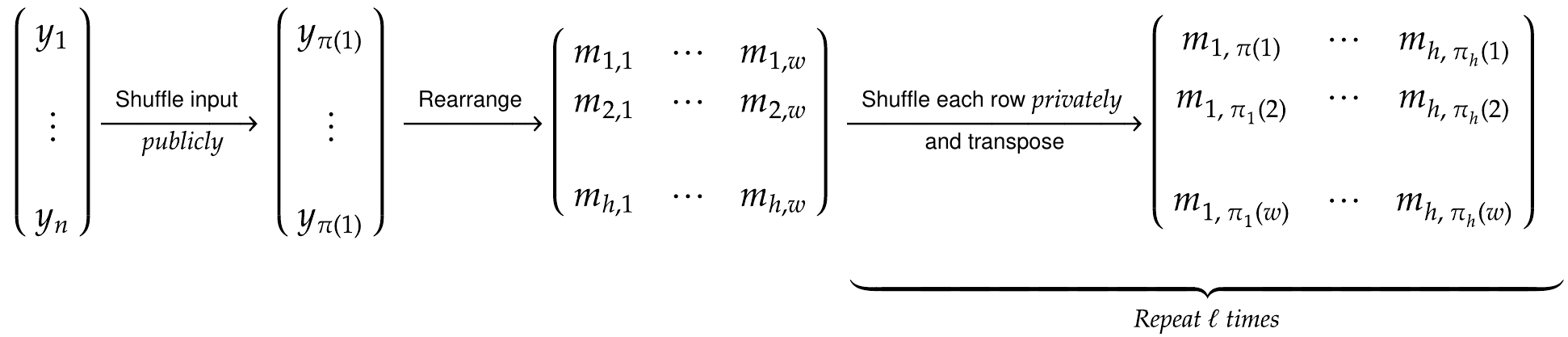}
\end{center}

\fi \caption{\small An $\ell$-round \emph{Alternating Shuffler}, 
denoted $\altshuffler{\pi, r}^\ell$, 
consists of the following steps:
(1) arrange the input randomly into a $h\times w$ grid $\mat{M} = (m_{i,j})_{i,j}$ using a public permutation $\pi$;
(2) shuffle the $h$ rows of $\mat{M}$ independently using $h$ private permutations $\pi_j: [w]\to [w]$ sampled from private randomness $r$;
(3) transpose $\mat{M}$; and
(4) repeat steps (2-3) $\ell$ times in total.
We shall drop $\pi$ from the notation when there is no public permutation, and also drop $r$ when it is not relevant or understood from the context.}
\label{fig:alternatin-shuffler}
\end{figure*}

The remaining question is how many iterations of row shuffling are required by such protocol. Håstad showed in~\cite[Theorem 3.6]{Hastad} that $O(1+\sigma/\log(n))$ iterations would suffice to provide a shuffle that is (approximately) uniform to within statistical distance $2^{-\sigma}$. However this would still require a fairly large number of rounds to achieve near-perfect shuffling.
Thus we ask the question: do two or three iterations of alternating shuffling suffice to provide enough privacy for differentially private data analysis protocols?

\paragraph{Properties of alternating shuffling.}
In Section~\ref{sec:weak} we prove that shuffling the rows twice suffices to provide a weak form of amplification by shuffling, in which the resulting $\epsilon$ scales with $e^{5\epsilon_0/2}$, where $\epsilon_0$ is the privacy parameter of the local randomizer. Whilst we do not believe the constant $5/2$ to be optimal, we prove in Section~\ref{sec:no-strong-amp} that it cannot be improved to less than $1$. The corresponding constant in the case of amplification by uniform shuffling is known to be $1/2$ (see Theorem~\ref{thm:uniform-shuffling}).
Furthermore, in Section~\ref{sec:ikos} we prove that shuffling rows twice is sufficient to securely implement the summation via shuffling protocol of Ishai et. al.~\cite{IKOS} (which we refer to as the IKOS protocol).
This enables us to implement the best known protocols for DP summation in the shuffle model \cite{BalleBGN20,DBLP:conf/eurocrypt/GhaziMPV20} using only an approximate shuffler.
We note that this result does not rely on amplification, and is not known to be possible using DO shufflers.

We leave open whether shuffling the rows three times gives strong amplification (i.e.\ with $1/2$ in the exponent) and it is also open whether it is sufficiently DO to imply strong amplification. We also note that randomly \emph{publicly} shuffling the inputs before applying two row shuffles might suffice to provide strong amplification, despite the fact that we show in Section~\ref{sec:not-do} that this is not DO with any non-trivial parameters.

\subsection{Related Work and Concrete Efficiency}
As mentioned in the introduction, several recent works 
have tackled secure single-server shuffling.
On the more theoretical front, B\"unz et al.~\cite{DBLP:journals/iacr/BunzHMS21}
show that {\em non-interactive} oblivious shuffling with sublinear server computation is possible,
and follows from standard assumptions in bilinear groups. The main drawback is that the construction requires a trusted setup.
Alon et al.~\cite{alonetal} provide a general solution to secure computation in our setting, and show that 
any efficient function with $O(n \polylog(n))$-sized output can be computed
while ensuring that each user's communication and computation costs are $\polylog(n)$. Moreover, 
the number of rounds of the protocol is also $\polylog(n)$.
This general construction requires Fully Homomorphic Encryption (FHE). The authors also show that for
simpler (but useful) functionalities such as summation and shuffling FHE is not needed.
On the other hand, the works of Mohavedi et
al.~\cite{Mohavedietal}, Bell et al.~\cite{DBLP:conf/ccs/BellBGL020}, and Gordon et al.~\cite{Gordonetal} are more focused
on practicality.
We discuss these results next, and provide a concrete analysis, both asymptotically
and in terms of concrete efficiency, in Table~\ref{tab:comparison}. For a discussion of classical approaches such
as those based on mix-nets and dining cryptographers networks we refer the reader to~\cite{Mohavedietal}.

The approach by Mohavedi et al.~\cite{Mohavedietal} is based on a combination of techniques from multi-party computation.
Roughly speaking, the protocol boils down to 
a cohort-based secure evaluation of a (probabilistic) sorting network $\mathcal{C}$
of $O(\log n)$ depth and $O(n\log n)$ gates, to shuffle $n$ user-provided values. 
First, clients agree on $n$ cohorts of size $O(\log n)$, and then each cohort evaluates 
a few gates securely, secret sharing the output of the evaluation to the corresponding cohort for the next gate.
Since each client belongs to 
a logarithmic number of cohorts, and each cohort evaluates 
a logarithmic number of gates, appropriate use of
efficient verifiable secret sharing and distributed cohort formation 
techniques results in a protocol with polylogarithmic work per client.
The fact that the depth of $\mathcal{C}$ is logarithmic leads to $O(\log n)$ rounds.

Bell et al.~\cite{DBLP:conf/ccs/BellBGL020} propose a constant-round vector summation protocol,
along with a reduction from shuffling to vector summation via linear sketching, and in particular using the Inverse Bloom Look-up Table~\cite{IBLT} data structure. This reduction results in a sketch of size $O(n)$, with $\log(n)$ bits per entry, as input for the aggregation, and thus $O(n)$ per-client work. On the other
hand this protocol requires just $4$ rounds.

As mentioned in the introduction, Gordon et al.~\cite{Gordonetal} propose a protocol for differentially Oblivious shuffling.
An important remark is that their threat model is weaker than ours, in that they assume that users can communicate independently of the server, 
and thus the communication pattern between users is not revealed to the adversary. This enables an onion routing based approach were clients use layered 
encryption to enforce that their message travels through a random sequence of users before reaching the server. 
The length of the sequence corresponds to the round complexity of the protocol. For uniform shuffling this is required to be quite long,
resulting in impractical costs. Gordon et al's observation is that for DO-shuffling is not the case.
The high-level observation is that a transposition of two messages can be realized when those messages are held by 
two honest clients in a given round of the protocol. For uniform shuffling one needs to realize up to $n$
such transpositions with large enough probability, while a single one is enough for differential obliviousness.
This results in a round complexity for Gordon et al.'s approach that is
independent of $n$, and depending only on the DP parameters $\epsilon, \delta$. Another important observation is that 
this onion routing based approach is not robust to dropouts, unlike our approach, Bell et al.'s and Mohavedi et al.'s. We note also that Bell at al. and Mohavedi et al. are only robust to a fixed fraction of malicious clients, whereas our protocol is robust against $n-1$ malicious clients.

As shown in Table~\ref{tab:comparison}, our amortized shuffler protocol has average costs that are 
either independent of the number of users $n$ (round complexity and computation),
or logarithmic (communication). This improves on all previous approaches. 
However, worst-case cost for a small number of parties is linear in $n$.
As we will see next our constants are small, and even our $O(n)$ worst-case costs compare favorably with previous
approaches. Finally, our 2-round alternating shuffling protocol achieves worst-case costs 
and round complexity independent of $n$ which, combined with the result from Håstad~\cite{Hastad},
yields a protocol with sublinear costs in $n$ and number of rounds that improves with $n$.
We now turn our attention to the second half of Table~\ref{tab:comparison}, to focus on concrete efficiency.

\paragraph{Concrete efficiency.}
Table~\ref{tab:comparison} shows two configurations of interest chosen as a comparison point with previous works. In 
both cases the number of clients is $33000$, and in one case no dropouts are expected ($\alpha = 0$)
while in the other protocols are expected to be robust to up to $1$ in $10$ clients to drop out.
The maximum fraction of malicious clients in $\gamma = 1/10$ and $1/20$, respectively.
The table shows how even the worst-case costs for the amortized shuffler are competitive with previous works,
while the remaining costs (average costs and number of rounds) are significantly better with the exception of the number of rounds in Bell et al. However, it is important to 
remark that in each round of Bell et al. all clients need to do work and send data to the server, 
while in most of our protocol's round only one client has to do work.
This means that whilst the protocol of Bell et al. would complete much faster if heavily parallelized, we could beat Bell et al. on wall clock time as well if the throughput to a single server during their masked input submission were a sufficient bottleneck.

We note that our amortized shuffling protocol is sufficiently efficient to be practical.
The average cost is independent of $n$ and fairly cheap in both communication and computation
(the main expense being the distributed decryptions).
The main restriction on when this protocol is practical is that the clients must be willing to pay $n$ times the average cost with probability about $1/n$, or alternatively the cost of $O(n)$ runs all at once if the client will run the protocol many times.
The alternating shuffle protocol is more amenable to production environments as the number of rounds is not too high and the communication cost per-client is $\tilde{O}(\sqrt{n})$ with a reasonable constant.
More details about the concrete efficiency evaluation of our protocols are provided in Section~\ref{sec:feasibility}.

\begin{table*}
\scriptsize
\begin{tabular}{lcccccc}
\hline
  &
  \multicolumn{1}{c}{\begin{tabular}[c]{@{}c@{}}Gordon et al.\\\cite{Gordonetal}\end{tabular}} &
  \multicolumn{1}{c}{\begin{tabular}[c]{@{}c@{}}Mohavedi et al.\\\cite{Mohavedietal}\end{tabular}} &
  \multicolumn{1}{c}{\begin{tabular}[c]{@{}c@{}}Bell et al.\\\cite{DBLP:conf/ccs/BellBGL020} \end{tabular}} &
  \multicolumn{1}{c}{\begin{tabular}[c]{@{}c@{}}Amortized\\ (Thm.~\ref{thm:amortized}) \end{tabular}} &
  \multicolumn{1}{c}{\begin{tabular}[c]{@{}c@{}}$2$-round\\ Alternating \\(Thm.~\ref{thm:alternatingasymptotics})\end{tabular}} &
  \multicolumn{1}{c}{\begin{tabular}[c]{@{}c@{}}Alternating \\(Thm.~\ref{thm:alternatingasymptotics}~\cite{Hastad})\end{tabular}}\\ \hline
Computation      & $O(\delta^2\log n)$ & $\polylog(n)$ & $O(n\log n)$              & $O(n)~/~O(\sigma)$ & $O(\sqrt{n})$ & $O(\sqrt{n})$ \\
Communication    & $O(\delta^2\log n)$ & $\polylog(n)$ & $O(n\log n    )$ & $O(n)~/~O(\sigma+\log n)$        & $O(\sqrt{n})$ & $O(\sqrt{n})$ \\
Round complexity &
  $O(\delta)$  &
  $O(\log n)$ &
  $O(1)$ &
  $O(\sigma)$ &
  $O(\sigma)$ &
  $O(\sigma + \sigma^2/\log n)$ \\
Functionality    & DO-Shuffling  & Shuffling  & Shuffling                    & Shuffling                 & \multicolumn{1}{c}{\begin{tabular}[c]{@{}c@{}}2-round\\ Alternating\end{tabular}} & Shuffling \\[1ex] \hline
\multicolumn{7}{l}{\begin{tabular}[c]{@{}c@{}}Concrete Efficiency $(\sigma=\log(1/\delta) = 13, n = 33,000, \alpha = 0, \gamma = 1/3, \textrm{128-bit inputs})$\end{tabular}}\\ \hline
~~~Communication    & 182-390KB & 128MB & 13MB & 4MB / 3KB & 35KB / 5KB & -- \\
~~~Number of rounds & 70-103 & 500 & 4 & 14 & 35 & -- \\ \hline
\multicolumn{7}{l}{\begin{tabular}[c]{@{}c@{}}Concrete Efficiency $(\sigma=\log(1/\delta) = 40, n = 33,000, \alpha = 1/10, \gamma = 1/20, \textrm{128-bit inputs})$\end{tabular}}\\ \hline
~~~Communication    & N/A & 128MB & 13MB & 4MB / 7KB & 27KB / 11KB & -- \\
~~~Number of rounds & N/A & 500 & 4 & 34 / 21 & 69 / 41 & -- \\ \hline
\end{tabular}
\vspace{5mm}
\caption{\small
Per-client costs of different protocols along with some properties and round complexity.
Entries written as A/B denote maximum per-user cost (A) and average per-user cost (B) when specifying communication, and worst and best case depending on how the dropouts fall when specifying number of rounds.
We expect the lower number to be the most relevant in practice.
The concrete numbers for Gordon et al. and Mohavedi et al.\ come from \cite{Gordonetal}; they only provide a range including $n=33,000$, hence the ranges.
Gordon et al.\ cannot tolerate dropouts, hence the N/A entries.
The protocol on the last column relies on an asymptotic result with no constants provided, hence the --.
Note shuffling with $\sigma$ as a security parameter provides the same guarantees as DO-shuffling with $\delta=2^{-\sigma}$, the reverse is not true.}\label{tab:comparison}
\end{table*}

\section{Properties of Alternating Shuffling}\label{sec:properties}

\subsection{Private Data Analysis via Anonymity}

The shuffle model of DP enables distributed data analysis protocols where $n$ users each holding a private input $x_i \in X$ collaborate with an analyzer to privately compute a statistic on the dataset $D = (x_1, \ldots, x_n)$ without sending the plain-text data to the analyzer.
Instead, the model relies on (one or more) trusted shuffling primitives capable of ``anonymously'' sending messages from the users to the analyzer, plus a randomization primitive run by the users before submitting their data to the shuffler.
Note that here we assume the shufflers operate as a perfectly secure black-box (i.e.\ we focus on the functionality rather than the protocol used to implement it). In this setting, one is interested in the privacy provided to the individual users by the view available to the analyzer (and, by post-processing, to anyone who gets access to the final result they release after the analysis).
In this context, we recall two general families of shuffling-based private data analysis protocols -- in the rest of this section we investigate how these paradigms extend from uniform shuffling to alternating shuffling.

\paragraph{Privacy amplification.}
In the privacy amplification paradigm, one considers single-message shuffling protocols where each user $i$ applies an $\epsilon_0$-LDP local randomizer $R$ to their data $x_i$ to obtain a message $y_i = R(x_i)$.
The analyzer then receives $S(y_1, \ldots, y_n)$, the result of applying a uniform random permutation to the users' messages, and is tasked with producing the result of the analysis.
Privacy of the protocol is based on analyzing the effect of changing one user's data on the view of the analyzer after the shuffling, e.g.\ using \Cref{thm:uniform-shuffling}.

\paragraph{Privacy via secure summation.}
This paradigm relies on using the IKOS \cite{IKOS} construction for implementing secure summation via shuffling in order to provide a distributed implementation of the standard output perturbation mechanism for summation in the central model of DP.
In this case, users essentially add to their input an $n$th fraction of the total noise required to privatize the sum of their messages, then split the noisy input into multiple additive shares inside a large enough group, and finally send each of the shares to the aggregator via a separate shuffler.
When using $m$ shares, this results in an $m$-message shuffling protocol. See \cite{DBLP:conf/ccs/BalleBGN20,GhaziMPV20} for further details.

\subsection{Alternating Shuffler Gives ``Weak'' Amplification}\label{sec:weak}

The amplification by \emph{uniform} shuffling result given in Theorem~\ref{thm:uniform-shuffling} has the form $\epsilon = O(e^{\epsilon_0/2})$ in terms of its dependence on $\epsilon_0$. The $1/2$ constant in the exponent is tight, and important to achieve good privacy-utility trade-offs in amplification-based protocols -- previous amplification by shuffling results had worst constants in the exponent. We call this level of amplification of local DP guarantees \emph{strong}. Here we prove that alternating shuffling provides \emph{weak} amplification, in the sense that the dependence on $n$ is the same as in Theorem~\ref{thm:uniform-shuffling}, but the constant in the exponent is worse. We will later show some worsening is in fact unavoidable.

\begin{theorem}\label{thm:weak-amplification}
Suppose $n$ is sufficiently large and $\epsilon_0 = O(1)$.
If the local randomizer $R$ is $\epsilon_0$-LDP, then applying the alternating shuffler $\alttwo$ to the outputs of $R$ a yields a protocol satisfying $(\epsilon_A, \delta_A)$-DP with
$\epsilon_A = O\left(\frac{e^{5\epsilon_0/2} \log(1/\delta)}{\sqrt{n}}\right)$ and $\delta_A = \left(1 + \frac{\sqrt{n} e^{2 \epsilon_0}}{\sqrt{n} + e^{2 \epsilon_0}}\right) \delta$.
\end{theorem}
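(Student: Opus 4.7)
The plan is to adapt the ``hiding among the clones'' argument of Theorem~\ref{thm:uniform-shuffling} to the two-stage structure of $\alttwo$: a first row shuffle followed, via the transpose, by a second row shuffle (equivalently, a row shuffle and then a column shuffle). Each stage shuffles only $\sqrt{n}$ items, so each alone would deliver amplification of order only $n^{-1/4}$; the bulk of the argument is to show that the two stages compound multiplicatively into the $n^{-1/2}$ rate of the theorem, at the cost of a worse $\epsilon_0$ exponent.

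First, by the usual joint-convexity reduction I restrict to a binary-input randomizer $R$ on $\{x_0, x_1\}$ and a differing user at position $(I,J)$. For every other cell $(i,j)$ the $\epsilon_0$-LDP condition yields a mixture decomposition
\begin{equation*}
 R(x_{i,j}) = \tfrac{1}{e^{\epsilon_0}+1} R(x_0) + \tfrac{1}{e^{\epsilon_0}+1} R(x_1) + \bigl(1-\tfrac{2}{e^{\epsilon_0}+1}\bigr) Q_{i,j},
\end{equation*}
so I can introduce indicators that label each other cell as a ``type-$0$ clone'', a ``type-$1$ clone'' or a residual. A Chernoff argument, whose failure probability I absorb into $\delta_A$, guarantees that each row (and each column) of the grid contains $\Omega(\sqrt{n}/e^{\epsilon_0})$ clones of each type.

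Conditioning on this concentration event I would handle the two hops in turn. The first row shuffle restricted to row $I$ is a uniform shuffle of $\sqrt{n}$ messages with $\Omega(\sqrt{n}/e^{\epsilon_0})$ available clones, so Theorem~\ref{thm:uniform-shuffling} yields an intermediate $(\epsilon_1,\delta_1)$-DP bound on the shuffled row-$I$ contents with $\epsilon_1 = O(e^{\epsilon_0/2}\sqrt{\log(1/\delta_1)/\sqrt{n}})$. After the transpose the differing user's message sits in slot $I$ of the uniformly chosen row $K=\pi_I(J)$; the remaining $\sqrt{n}-1$ slots of $K$ are each a uniform random message of the corresponding original row and, on the concentration event, once more contain $\Omega(\sqrt{n}/e^{\epsilon_0})$ clones. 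A second application of Theorem~\ref{thm:uniform-shuffling} to the second shuffle of row $K$, this time treating the differing slot as $\epsilon_1$-indistinguishable (courtesy of the first hop) rather than $\epsilon_0$-indistinguishable, contributes a further $\sqrt{\log(1/\delta)/\sqrt{n}}$ factor; multiplying the two saving factors produces the $\log(1/\delta)/\sqrt{n}$ dependence (without the outer square root) of the statement. The naive exponent $e^{\epsilon_0}=e^{\epsilon_0/2}\cdot e^{\epsilon_0/2}$ is bloated up to $e^{5\epsilon_0/2}$ by the $e^{\epsilon_0}$ coupling cost of passing from the first-hop mixture form back to an LDP statement usable in the second application, plus an $e^{\epsilon_0/2}$ from chaining an $(\epsilon_1,\delta_1)$-DP input into Theorem~\ref{thm:uniform-shuffling}.

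The hardest step is precisely this chaining between the two hops. After the first shuffle, the clones that help hide the differing user in row $K$ of the transposed grid are \emph{not} freshly drawn: their identities and positions are correlated with the first-shuffle permutations $\pi_1,\ldots,\pi_{\sqrt{n}}$, and those same permutations decide whether the differing user actually lands in row $K$. Re-randomising through the uniform second shuffle $\sigma_K$, while carefully conditioning on the clone count of $K$ and on the event $\pi_I(J)=K$, is where both the extra $e^{2\epsilon_0}$ in the exponent and the additional factor $\sqrt{n}e^{2\epsilon_0}/(\sqrt{n}+e^{2\epsilon_0})=\Theta(\min(\sqrt{n},e^{2\epsilon_0}))$ multiplying $\delta$ in $\delta_A$ most plausibly originate---the latter reflecting whether one loses an extra $\delta$ per row through the concentration bound or through the mixture-to-LDP conversion. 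Once this coupling is handled cleanly, the remaining tail bounds on the clone counts and the bookkeeping needed to sum the $\delta_1$'s into $\delta_A$ follow standard amplification-by-shuffling lines.
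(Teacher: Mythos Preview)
Your plan diverges substantially from the paper's proof, and the step you flag as ``hardest'' is actually where the approach breaks. You propose to apply Theorem~\ref{thm:uniform-shuffling} to the first shuffle of row~$I$, obtain an $(\epsilon_1,\delta_1)$ guarantee with $\epsilon_1\sim e^{\epsilon_0/2}n^{-1/4}$, and then feed that $\epsilon_1$ back in as the ``local'' parameter for a second application of the theorem at the transposed row $K$. But the first shuffle does not make the single message $R(x_{I,J})$ any closer to $R(x'_{I,J})$: the permutation $\pi_I$ only moves that message, it does not alter it, so the differing slot remains $\epsilon_0$-indistinguishable, not $\epsilon_1$. What Theorem~\ref{thm:uniform-shuffling} actually delivers after the first hop is $(\epsilon_1,\delta_1)$-closeness of the row-$I$ \emph{multiset}, and that is a statement about a joint object, not a per-slot LDP bound one can plug into a second invocation of the theorem. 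The multiplicative compounding of two $n^{-1/4}$ factors is therefore unjustified, and the extra $e^{2\epsilon_0}$ and the curious $\delta$-multiplier you try to reverse-engineer do not arise from coupling costs in a two-hop shuffle chain.

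The paper takes a different decomposition: it views the output column by column, $C_1,\ldots,C_{\sqrt n}$, and runs an adaptive composition over the $\sqrt n$ columns. For each column $i$ it first bounds the \emph{posterior} probability that the differing user landed there given the previous columns, $\Pr[I=i\mid C_{<i}=c_{<i}]\le\gamma:=e^{2\epsilon_0}/(e^{2\epsilon_0}+\sqrt n-1)$, via a two-element coupling that is responsible for the $e^{2\epsilon_0}$ factor. It then combines Lemma~\ref{lem:sampling} (sampling amplification at rate $\gamma$) with Theorem~\ref{thm:uniform-shuffling} applied to the uniform second-stage shuffle of the $\sqrt n$ entries of that column, giving a per-column loss $\epsilon_C=\epsilon_{\mathrm{sampling}}(\epsilon_{\mathrm{clones}}(\epsilon_0,\delta,\sqrt n),\gamma)$. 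Advanced composition across $\sqrt n$ columns contributes a factor $n^{1/4}$, and the arithmetic $\gamma\cdot e^{\epsilon_0/2}n^{-1/4}\cdot n^{1/4}\sim e^{5\epsilon_0/2}/\sqrt n$ yields the claimed $\epsilon_A$; the stated $\delta_A$ is exactly $\sqrt n\,\gamma\,\delta+\delta$. In short, the first row shuffle enters the analysis only through the \emph{random column assignment} that drives the sampling lemma, not through a nested application of shuffling amplification.
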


At a high level, the proof works by bounding the privacy loss incurred by releasing each individual column and then applying a composition analysis to bound the total privacy loss of the protocol.
Informally speaking, each column is the result of applying uniform shuffling to a database with $\sqrt{n}$ individuals, which has a privacy loss of order $n^{-1/4}$ (cf.\ Theorem~\ref{thm:uniform-shuffling}).
This privacy is further amplified by realizing that only one of these columns will contain the user that differs between the two databases.
Since the probability the differing user lands in a particular columns is $n^{-1/2}$, we obtain that each columns incurs a privacy loss of the order $n^{-1/4} \cdot n^{-1/2}$ (cf.\ Lemma~\ref{lem:sampling}).
Applying the advanced composition theorem \cite{DBLP:journals/tit/KairouzOV17} to the privacy loss incurred by $\sqrt{n}$ columns then yields a total privacy loss of the order $n^{-1/4} \cdot n^{-1/2} \cdot n^{1/4} = n^{-1/2}$ as claimed in the theorem.

We note that our proof does not strive to optimize the constant $5/2$ in the power of $e^{\epsilon_0}$.
In fact, it is possible to slightly improve this constant at the cost of a more cumbersome analysis and worse constants in the big-$O$ by using a non-homogeneous composition argument and a column-dependent $\gamma$ (see e.g.\ \cite{DBLP:conf/nips/BalleKMTT20}). We defer the details to future work.

\paragraph{Corrupt Clients.}
The analysis above assumes none of the users involved in the protocol collude with the server.
Nonetheless, the analysis can be extended to the case where a small fraction of the total number of users collude with the server by paying a small degradation in the final privacy guarantees.
However, achieving this requires an additional (public) permutation of the users to be applied before the protocol starts -- this ensures that not too many of the colluding users take positions in the row containing the user whose data is being attacked.

\begin{theorem}\label{thm:weak-amplification-corrupted}
Consider the setting of \Cref{thm:weak-amplification} where a public random permutation is applied to the users before the start of the protocol and where up to $\gamma n$ users collude with the server.
Then the protocol satisfies $(\epsilon_A, \delta_A)$-DP with
$\epsilon_A = O\left(\frac{e^{5\epsilon_0/2} \log(1/\delta)}{(1-\gamma) \sqrt{n}}\right)$ and $\delta_A = \left(3 + \frac{\sqrt{n} e^{2 \epsilon_0}}{\sqrt{n} + e^{2 \epsilon_0}}\right) \delta$.
\end{theorem}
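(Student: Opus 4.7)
The approach is to reduce to the proof of \Cref{thm:weak-amplification} by conditioning on a high-probability event that the public random permutation spreads the corrupt clients uniformly across the grid, so that the ``effective'' honest population available for amplification in each row and column is of order $(1-\gamma)\sqrt{n}$ rather than $\sqrt{n}$.

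First, I would define a good event $G$: after the public permutation is applied, every row and every column of the $\sqrt{n} \times \sqrt{n}$ grid contains at most $(\gamma + c)\sqrt{n}$ corrupt clients, for some small constant $c > 0$ chosen so that $(1-\gamma-c) = \Theta(1-\gamma)$. Since the public permutation is uniform, the positions occupied by the $\gamma n$ corrupt clients form a uniformly random subset of the $n$ cells. A Chernoff--Hoeffding tail bound for sampling without replacement shows that the number of corrupt clients in any fixed row or column deviates from its mean $\gamma\sqrt{n}$ by more than $c\sqrt{n}$ with probability exponentially small in $\sqrt{n}$; a union bound over the $2\sqrt{n}$ rows and columns yields $\Pr[\neg G] \leq \delta$ for $n$ sufficiently large. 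This accounts for (at least) one of the two extra units of $\delta$ appearing in the coefficient of $\delta_A$ relative to \Cref{thm:weak-amplification}.

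Conditioned on $G$ and on the outputs of the corrupt clients (which the adversary sees), the proof of \Cref{thm:weak-amplification} carries through essentially verbatim, with three quantitative changes: (i) the per-column amplification via \Cref{thm:uniform-shuffling} is applied to at least $(1-\gamma-c)\sqrt{n}$ honest LDP messages, so the per-column privacy loss scales as $O(e^{\epsilon_0/2}/\sqrt{(1-\gamma)\sqrt{n}})$; (ii) the probability that the differing user lands in any particular column is at most $1/((1-\gamma-c)\sqrt{n})$, so \Cref{lem:sampling} contributes an amplification factor of order $e^{\epsilon_0}/((1-\gamma)\sqrt{n})$; and (iii) advanced composition over the $\sqrt{n}$ columns adds the usual $\sqrt{\sqrt{n}\log(1/\delta)}$ factor (and an additive $\delta$, producing the second additional unit in the coefficient of $\delta_A$). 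Multiplying the three factors yields $\epsilon_A = O\bigl(e^{5\epsilon_0/2}\log(1/\delta)/((1-\gamma)\sqrt{n})\bigr)$, as claimed.

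The main obstacle is to verify that conditioning on $G$ and on the adversary's view of corrupt outputs remains compatible with the amplification accounting for the private row shuffles. Specifically, one has to check that conditioned on the corrupt outputs, the honest messages inside each column are still uniformly distributed among the honest cells of that column (which follows because the private row permutations act independently and uniformly on honest positions once corrupt positions are fixed), and that the sampling step in \Cref{lem:sampling} may legitimately use the honest-only denominator $(1-\gamma-c)\sqrt{n}$ rather than the full $\sqrt{n}$. A secondary issue is tracking the precise constant ``$3$'' in $\delta_A$: it must account for the original $\delta$ from \Cref{thm:weak-amplification}, the failure probability $\Pr[\neg G]$, and the additive slack introduced by advanced composition, with the hypergeometric concentration argument calibrated so that neither tail exceeds $\delta$.
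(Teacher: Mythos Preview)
Your approach matches the paper's: condition on a high-probability concentration event for the public permutation, then rerun the proof of \Cref{thm:weak-amplification} with the honest-only counts $w' \approx (1-\gamma)\sqrt{n}$ and $h' \approx (1-\gamma)\sqrt{n}$ in place of $w$ and $h$. The paper's concentration event is slightly more targeted (it controls only the target user's row and the resulting ``valid'' columns, not every row and column), and the constant $3$ in $\delta_A$ arises from \emph{two} separate tail bounds $\delta_w + \delta_h = 2\delta$ added to the $1$ already present in \Cref{thm:weak-amplification}---the advanced-composition slack is that original $1$, not an additional unit as you suggest.
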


\subsection{Alternating Shuffler Doesn't Give Strong Amplification}\label{sec:no-strong-amp}

Here we will give an example of a pair of inputs and a local randomizer, with local epsilon $0.51\log(n)$, on which amplification by $\alttwo$ doesn't give a constant $\epsilon$ for any non-trivial $\delta$. That is amplification by $\alttwo$ is asymptotically less powerful than amplification by shuffling.

\begin{theorem}\label{thm:no-strong}
There exists a family of pairs of databases $D_0,D_1$ and local randomizers $R$ which are locally $(0.5+o(1))\log(n)$-DP, but for which amplification by $\alttwo$ fails to provide $(O(1),1-\Omega(1))$-DP.
\end{theorem}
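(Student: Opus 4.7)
The plan is to exhibit an explicit family for which $\alttwo$ leaks almost all information about the differing user, forcing the total variation distance between the two output distributions to tend to $1$. Fix any $\alpha(n)\to\infty$ with $\alpha(n)=o(\log n)$ (for concreteness $\alpha(n)=\log\log n$) and set $\epsilon_0=\tfrac12\log n+\alpha(n)$; this is of the required form $(\tfrac12+o(1))\log n$. Let $R$ be the binary randomized response that flips its input with probability $p=1/(1+e^{\epsilon_0})$, which is $\epsilon_0$-LDP and satisfies the key quantitative identity $\sqrt n\cdot p\sim e^{-\alpha(n)}\to 0$. I will place the users canonically row-by-row into the $\sqrt n\times \sqrt n$ grid, let $D_0$ fill the first initial row with $0$'s and every other row with $1$'s, and let $D_1$ agree with $D_0$ except that cell $(1,1)$ is flipped to $1$.

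The distinguishing event will be $E=\{\text{some column of the output of }\alttwo\text{ is entirely }1\}$. The structural property of $\alttwo$ I exploit is that, after the final transpose, each column of the output contains exactly one entry coming from each initial row: writing $\sigma_a$ for the first-iteration permutation of initial row $a$ and $\tau_i$ for the second-iteration permutation of the $i$-th row of the transposed matrix, the entry at output position $(i,j)$ is the response of the user at initial position $(\tau_j^{-1}(i),\sigma_{\tau_j^{-1}(i)}^{-1}(j))$, and for fixed $j$ the assignment $i\mapsto \tau_j^{-1}(i)$ is a permutation of $[\sqrt n]$. I will then estimate the two probabilities separately. Under $D_0$, every column has one entry with true value $0$ (the one from initial row~$1$) and $\sqrt n-1$ entries with true value~$1$, so a column is all-$1$ with probability $p(1-p)^{\sqrt n-1}$; a union bound gives $P_{D_0}(E)\le \sqrt n\cdot p(1-p)^{\sqrt n-1}\sim e^{-\alpha(n)}\to 0$. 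Under $D_1$, I condition on the random column $c^\ast=\sigma_1(1)$ that receives user $(1,1)$'s response: now all $\sqrt n$ users contributing to column $c^\ast$ have true value $1$, so the column is entirely $1$ with probability $(1-p)^{\sqrt n}\sim e^{-e^{-\alpha(n)}}\to 1$, whence $P_{D_1}(E)\to 1$.

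Combining the two estimates, $P_{D_1}(E)-e^{C_1}P_{D_0}(E)\to 1$ for every constant $C_1\ge 0$; for any fixed $c\in(0,1)$ this violates $(C_1,1-c)$-DP for all sufficiently large $n$, which is exactly the failure of $(O(1),1-\Omega(1))$-DP asserted by the theorem. The main delicate step in the plan is the calibration of $\alpha(n)$: it has to tend to infinity so that $\sqrt n\cdot p\to 0$ simultaneously suppresses the spurious all-$1$-column rate in $D_0$ \emph{and} pushes the survival probability $(1-p)^{\sqrt n}$ of the special column in $D_1$ up to $1$, while remaining $o(\log n)$ so that $\epsilon_0/\log n\to \tfrac12$. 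Once the parameters are chosen, the two probability estimates reduce to the elementary expansion $(1-p)^k=e^{-kp+O(kp^2)}$ and need no further concentration machinery.
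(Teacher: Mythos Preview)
Your proof is correct and follows essentially the same approach as the paper's own proof: exploit the structural fact that each output column of $\alttwo$ contains exactly one message from each initial row, make one row ``special'' (all of the minority symbol), and use the event ``some output column is monochromatic in the majority symbol'' as the distinguisher. The only difference is a symmetric relabeling: the paper takes the first row to be (almost) all $1$'s and the remaining rows all $0$'s and looks for an all-zero output column, whereas you take the first row to be all $0$'s and the remaining rows all $1$'s and look for an all-one output column; your parameterisation $\epsilon_0=\tfrac12\log n+\alpha(n)$ is just an explicit form of the paper's condition $e^{\epsilon_0}=\omega(\sqrt n)$.
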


\subsection{IKOS  in the Alternating Shuffler model}\label{sec:ikos}
In this Section we show that 
the the summation via shuffling protocol of Ishai et al.~\cite{IKOS}
(which we refer to as the IKOS protocol)
is secure in the $2$-round alternating shuffler model. In the original protocol, 
each client splits their input into $m$
additive shares, and sends them into the (uniform) shuffler. The server then adds all resulting shuffled shares to get the resulting sum. Ishai et al. showed that $m = O(\log n)$
shares per client are enough for security.
This result was later improved by  Ghazi et al.~\cite{GhaziMPV20} and Balle et al.~\cite{BalleBGN20}, who showed that in fact $m= O(1)$ suffices. The main result in this section is that the same $m = O(1)$ bound
applies to instantiations of IKOS with a 
$2$-round alternating suffler, instead of a uniform shuffler. 

We start by defining formally the local processing in the IKOS protocol, i.e. how client $i$
obtains the messages to be shuffled given their input $x_i$. Specifically,
for any $\tup{x} = (x_1,\ldots,x_n) \in \dom{G}^n$, we define the random variables $\mech{R}_{m}(x_i) := (\rv{Y}^{(1)}_i, \ldots, \rv{Y}^{(m)}_i)$, $i \in [n]$, obtained by splitting each input into $m$ additive shares.

We identify the $m$-parallel IKOS protocol over $\G$
with the randomized map $\mech{V}_{m,n} : \dom{G}^n \to (\dom{G}^n)^{m}$
defined next, and
corresponding to the view of the aggregator in an $m$-message protocol in the {\em $2$-round alternating} shuffle model with randomizer $\mech{R}_{m}$:

\begin{align*}
\ikosview(\vec{x}) = \Big(\altshufflertwo{\pi, r_1}&\big(\rv{Y}^{(1)}_1, \ldots, \rv{Y}^{(1)}_n\big), \ldots, \altshufflertwo{\pi, r_m}\big(\rv{Y}^{(m)}_1, \ldots, \rv{Y}^{(m)}_n\big)\Big)
\end{align*}

Note that this model assumes $m$ {\em single-message} shufflers $\altshufflertwo{\pi, r_1}$, $\ldots$, $\altshufflertwo{\pi, r_n}$ that share the {\em same public randomness}, and each with {\em their own secret internal randomness} $r_i$.

We next show that the IKOS protocol in the alternating shuffler setting is secure with $m = O(1)$, for sufficiently large $n$, thus recovering the results from Ghazi et al.~\cite{GhaziMPV20} and Balle et al.~\cite{BalleBGN20} in the uniform shuffler model.
The idea of the proof is to view the construction of 
$\ikosview(\vec{x})$
as a series of applications of the IKOS protocol with 
uniform permutations,
but over subsets of the input of size $\sqrt{n}$. These subsets corresponds to 
rows (or columns) of the $h\times w$
matrix in the internal state of the alternating shuffler as shown in Figure~\ref{fig:alternatin-shuffler} (we assume that $h=w = \sqrt{n}$).
The proof thus applies the result of Balle et al.~\cite{BalleBGN20}
$1+\sqrt{n}$ times (to all rows of the internal state matrix, and then to a column, of each of the $m$ single-message alternating shufflers that constitute our model). Note we use that the result in Balle et al. doesn't require all the messages to be shuffled together there can be $m$ different shuffles that each client puts one message into and their result still holds. We have not substantially optimized the constants in this or the following derived theorem.

\begin{theorem}\label{theorem:ikos-all-honest-clients}
Let $n \geq 361$ and $m \geq 3$.
The protocol $\ikosview$ provides worst-case statistical security with parameter 
$$\sigma = (m-2)\left(\frac{1}{2}\log_2(n) - \log_2(e)\right) - \log_2(q) - 2 \enspace.$$
Therefore the required number of messages per client is 
$$m = O\left(1+\frac{\sigma + \log_2(q)}{\log_2(n)}\right) \enspace.$$
\end{theorem}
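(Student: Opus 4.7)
The plan is to reduce the analysis to repeated applications of Balle et al.'s IKOS security bound~\cite{BalleBGN20} in the standard uniform-shuffler model, exploiting the structural decomposition of $\alttwo$ into two rounds of $\sqrt{n}$ \emph{independent} uniform shuffles over $\sqrt{n}$-sized subsets (rows, and columns after transposition). Crucially, since each of the $m$ single-message alternating shufflers uses its own independent secret randomness $r_k$, the $m$ per-shuffler permutations acting on a given row form exactly an instance of the single-server IKOS protocol with $\sqrt{n}$ clients and $m$ messages per client over $\G$, so Balle et al.'s bound applies directly.

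First, I would introduce a hybrid $H_1$ in which, after the first row shuffle but before transposition, the view of each of the $h=\sqrt{n}$ rows (jointly across the $m$ shufflers) is replaced by a fresh IKOS simulation depending only on the row sum $S^{(k)}_i = \sum_{j} Y^{(k)}_{\pi^{-1}(i,j)}$. Balle et al.'s theorem gives that each row is within statistical distance $2^{-\sigma_B(\sqrt{n}, m, q)}$ of this simulation, where $\sigma_B$ denotes Balle et al.'s security parameter; independence across rows together with a triangle inequality implies that the real post-first-shuffle view is $\sqrt{n} \cdot 2^{-\sigma_B(\sqrt{n}, m, q)}$-close to $H_1$.

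Second, I would apply the transpose and second row shuffle to $H_1$, introducing a further hybrid $H_2$ in which the output of one column shuffle is replaced by an IKOS simulation depending only on its column sum $T^{(k)}_j$. The key observation is that in $H_1$ the $\sqrt{n}$ entries of any single column are marginally uniform in $\G$ and independent across rows (each is the $j$-th coordinate of a uniform $w$-tuple summing to $S^{(k)}_i$, and rows are independent), so the column shuffle across the $m$ shufflers is again a standalone instance of the standard IKOS protocol with $\sqrt{n}$ clients and $m$ messages, contributing another $2^{-\sigma_B(\sqrt{n}, m, q)}$ to the statistical distance. Finally, in $H_2$ the column sums $(T^{(k)}_j)_j$ are uniformly distributed on $\G^w$ conditioned on $\sum_j T^{(k)}_j = \sum_i x_i$ (each is a sum of $\sqrt{n}$ independent uniform elements of $\G$), so $H_2$ depends on the inputs only through the total sum, matching the IKOS ideal functionality.

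Combining the two hybrid steps by the triangle inequality bounds the overall statistical distance by $(\sqrt{n}+1) \cdot 2^{-\sigma_B(\sqrt{n}, m, q)}$; plugging in the explicit form of Balle et al.'s $\sigma_B$ with $n' = \sqrt{n}$ and absorbing the union-bound loss $\log_2(1+\sqrt{n}) \le \tfrac{1}{2}\log_2 n + 1$ yields the claimed security parameter, with the coefficient $(m-1)$ of $\log_2 n'$ in $\sigma_B$ dropping to $(m-2)$ of $\tfrac{1}{2}\log_2 n$ in $\sigma$ exactly because of this loss. The bound on $m$ then follows by inverting. The main obstacle I expect is the second hybrid step: the columns of $H_1$ are correlated across rows through the row-sum constraints even though each column is marginally uniform, so I must justify carefully why a single invocation of Balle et al.\ at the column stage suffices to capture the joint view of all $\sqrt{n}$ columns, plausibly by arguing that conditional on the column sums the other columns are already in the ideal form given $H_1$'s structure. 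Should that argument break down, a fallback is to apply Balle et al.\ once per column at the cost of an extra $\sqrt{n}$ factor in the statistical distance, which still yields the same asymptotic bound on $m$ up to a constant.
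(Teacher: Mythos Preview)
Your approach mirrors the paper's: decompose $\alttwo$ into row and column IKOS instances, invoke Balle et al.\ (specifically Corollary~6.1 of \cite{BalleBGN20}) at each stage, and combine via the triangle inequality. The row step and the final bookkeeping are essentially identical.

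The difference lies in the column step, precisely where you flag the obstacle. The paper sidesteps the inter-column correlation issue entirely by viewing the \emph{entire} column-shuffle stage as a \emph{single} IKOS instance with $\sqrt{n}$ clients (one per row) and $m\sqrt{n}$ messages per client: after your hybrid $H_1$, row $i$ holds $m\sqrt{n}$ uniform additive shares of its row sum $R_i$, and the $m\sqrt{n}$ independent column shuffles (one per pair $(k,j)$) are exactly the shuffles of an IKOS protocol with that many messages. One invocation of Balle et al.\ with parameters $(\sqrt{n}, m\sqrt{n}, q)$ then gives a column contribution $2^{-\sigma_2}$ with $\sigma_2 = (m\sqrt{n}-1)(\tfrac{1}{2}\log_2 n - \log_2 e) - \log_2 q \geq \sigma_1$, so the total is at most $(1+\sqrt{n})2^{-\sigma_1}$, matching your final calculation.

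Your per-column view with only $m$ messages runs straight into the correlation you worry about; your fallback of one invocation per column would indeed work, but the paper's single $m\sqrt{n}$-message invocation is the cleaner resolution. One small slip: the column sums $(T^{(k)}_j)_{k,j}$ are uniform conditioned on $\sum_{k,j} T^{(k)}_j = \sum_i x_i$ jointly over both $k$ and $j$, not on $\sum_j T^{(k)}_j = \sum_i x_i$ for each fixed $k$; the latter identity is false in $H_1$.
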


The requirement on $n$, and the $\log_2(n)$ term due to the union bound across IKOS instances we don't believe to be necessary. In fact, we suspect that a more direct proof than the one above,
adapting the ideas from  Balle et al.~\cite{BalleBGN20} instead of using their result as a black box probably works. We leave a detailed proof of this approach for further work.

We now extend the above result to the setting where up to a $\gamma$-fraction of the clients might be 
dishonest and collude with the server. The idea for this extension is simple: 
the public randomness $\pi$ 
induces a high-probability lower bound
on the number of honest clients 
in each row of the input matrices
to be shuffled.
This is enough to port the argument in the proof of the previous theorem to the setting with corrupted clients.

\begin{theorem}[IKOS with Corrupted Inputs]\label{thm:ikos-corrupted}
Let $(1-\gamma)\sqrt{n} - (\sigma + \log n)^{1/2}n^{1/4} \geq 19$ and $m \geq 3$.
The protocol $\ikosview$ is robust to up to $\gamma n$ corrupt clients, and provides worst-case statistical security for all parameters $\sigma$ such that
$$\sigma \leq 
(m-2)\log_2\left(\frac{(1-\gamma)n^{1/2}-(\sigma + \log n)n^{1/4}}{e}\right) - \log_2 q - 3 \enspace.$$
Therefore, for large enough $n$ and $q\in \texttt{poly}(n)$, the required number of messages per client $m$ is $O(1)$.\
\end{theorem}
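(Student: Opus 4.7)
The plan is to reduce to \Cref{theorem:ikos-all-honest-clients} by conditioning on a good event under the public permutation $\pi$: that every row of the initial $\sqrt{n}\times\sqrt{n}$ grid contains sufficiently many honest clients. Once this event holds, the corrupt messages in each row can be treated as worst-case fixed adversarial inputs, and the row-then-column applications of the Balle et al.\ IKOS analysis from the proof of \Cref{theorem:ikos-all-honest-clients} go through restricted to the honest shares within each row/column.

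First I would analyse how $\pi$ distributes the $(1-\gamma) n$ honest clients across the $\sqrt{n}$ rows. The count $X$ in a given row is hypergeometric with mean $(1-\gamma)\sqrt{n}$, so Hoeffding's inequality for sampling without replacement gives $\Pr[X \leq (1-\gamma)\sqrt{n} - t] \leq \exp(-2t^2/\sqrt{n})$. Taking $t = \Theta((\sigma + \log n)^{1/2} n^{1/4})$ with the right constant and union-bounding over the $\sqrt{n}$ rows, the event $E$ that every row has at least $h_\star := (1-\gamma)\sqrt{n} - (\sigma + \log n)^{1/2} n^{1/4}$ honest clients fails with probability at most $2^{-\sigma-1}$. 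The precondition $h_\star \geq 19$ stated in the theorem is exactly what is needed to re-invoke \Cref{theorem:ikos-all-honest-clients} with $h_\star$ in place of $\sqrt{n}$ on each row.

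Conditional on $E$, the argument of \Cref{theorem:ikos-all-honest-clients} transfers almost verbatim. Each of the $m$ alternating shufflers $\altshufflertwo{\pi,r_k}$ consists of an inner row shuffle followed by an inner column shuffle; the row phase is $\sqrt{n}$ independent uniform single-message shuffles of length $\sqrt{n}$, and the Balle et al.\ IKOS guarantee then applies to the at least $h_\star$ honest uniform additive shares in each row, treating the corrupt shares as adversarial constants. This yields a per-row statistical distance on the order of $q \cdot (e/h_\star)^{m-2}$. Union bounds over the $\sqrt{n}$ rows, the column step, and the $m$ shufflers introduce the logarithmic factors absorbed into the constant $-3$ in the final bound, and the relaxation from $(\sigma+\log n)^{1/2}$ inside $h_\star$ to the larger $(\sigma+\log n)$ threshold appearing in the statement is the extra slack needed to complete this bookkeeping.

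Combining the $2^{-\sigma-1}$ failure probability of $E$ with the conditional security loss yields the claimed $2^{-\sigma}$ statistical distance from the ideal functionality, and the $m = O(1)$ asymptotic for $q \in \texttt{poly}(n)$ and fixed $\gamma<1$ follows immediately since $\log_2(h_\star/e) = \Theta(\log n)$ in that regime. The main obstacle is the numerical bookkeeping: calibrating the Hoeffding and union-bound trade-off so that both the precondition threshold and the logarithmic argument in the security bound come out exactly as stated, and shepherding the $\sigma$ parameter consistently through the slightly different quantities appearing inside $h_\star$ (a $\sqrt{\sigma+\log n}$ term from concentration) and inside the $\log_2$ in the security bound (a $\sigma+\log n$ term after relaxation).
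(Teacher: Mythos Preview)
Your proposal is correct and follows essentially the same approach as the paper: both argue via a hypergeometric tail bound plus union bound over the $\sqrt{n}$ rows that the public permutation $\pi$ places at least $\ell = (1-\gamma)\sqrt{n} - (\sigma+\log n)^{1/2}n^{1/4}$ honest clients in every row except with probability $2^{-\sigma}$, and then re-run the argument of \Cref{theorem:ikos-all-honest-clients} on the honest submatrices with $\ell$ in place of $\sqrt{n}$. The paper's proof is in fact somewhat sketchier than yours on the bookkeeping, and your discussion of the numerical slack is apt.
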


Equipped with a communication efficient 
exact summation protocol, we now turn our 
attention to the problem of diferentially private real summation.
We use the reduction from secure summation to differentially private real summation
by Balle et al.~\cite{BalleBGN20} (Theorem 5.2) to obtain a 
$O(1)$ error protocol, thus matching the error of  central model in the (alternating) shuffler model. The 
basic ideas of the reduction are to
(i) apply an appropriate quantization scheme
of the real-valued input to balance quantization error with DP noising error, and 
(ii) simulate noise addition is a distributed way by relying on infinite divisibility properties of {\em discrete} random variables, i.e. a geometric random variable can be expressed as a sum of negative binomial random variables. 

Given real inputs in $[0,1]$ we can round them to the nearest multiple of $1/\sqrt{n}$, multiply by $\sqrt{n}$ to get an integer and do the addition modulo $2n\sqrt{n}$ (which is then the value of $q$). This gives the following result.

\begin{theorem}[Constant Error DP Summation]
There exists an $(\epsilon, \delta)$-DP
protocol in the multi-message alternating shuffler model for real summation
with MSE $O(1/\epsilon^2)$ and  
$O(1+\log(1/\delta)/\log(n))$ messages,
each of $O(\log n)$ bits in size.
\end{theorem}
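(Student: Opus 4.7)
The plan is to instantiate the reduction from secure summation to real-valued DP summation of Balle et al.~\cite{BalleBGN20} (Theorem 5.2) with the alternating-shuffler IKOS protocol of Theorem~\ref{thm:ikos-corrupted} playing the role previously played by the uniform-shuffle IKOS. At a high level, the protocol has three local stages followed by one distributed stage: (i) each client deterministically quantises its input $x_i\in[0,1]$ to the nearest multiple of $1/\sqrt{n}$ and rescales by $\sqrt{n}$ to obtain an integer $\tilde{x}_i\in\{0,1,\dots,\sqrt{n}\}$; (ii) each client samples a local negative-binomial noise share so that the sum of all $n$ shares is distributed as a symmetric geometric (discrete Laplace) variable calibrated to give $(\epsilon,\delta)$-DP for the integer sum; (iii) each client splits its noisy quantised input into $m$ additive shares in $\mathbb{Z}_q$ with $q = 2n\sqrt{n}$; and finally (iv) the $m$ shares are delivered to the server via $m$ instances of the $2$-round alternating-shuffler IKOS protocol, after which the server sums them in $\mathbb{Z}_q$ and rescales by $1/\sqrt{n}$.

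First I would reproduce the quantisation/MSE bookkeeping from Balle et al.: rounding to granularity $1/\sqrt{n}$ introduces per-client error bounded by $1/(2\sqrt{n})$, so the additive bias plus variance of the $n$-fold quantised sum, once rescaled, contributes $O(1)$ to the MSE. The modulus $q = 2n\sqrt{n}$ is chosen large enough that the integer value $\sum_i \tilde{x}_i$ plus the symmetric geometric noise of variance $O(1/\epsilon^2)$ stays in $[0,q)$ except on a set of probability at most $\delta$, so with overwhelming probability the server recovers the noisy clear sum exactly before rescaling.

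Next I would invoke the infinite divisibility of the symmetric geometric distribution as a difference of two sums of negative binomials (the standard construction used in~\cite{BalleBGN20,GhaziMPV20}) to argue that the local sampling in step (ii) indeed produces, in aggregate, $(\epsilon,\delta)$-DP noise on the integer sum with variance $O(1/\epsilon^2)$. Rescaling by $1/\sqrt{n}$ turns this into $O(1/(n\epsilon^2))$ variance on the sum of rescaled quantised inputs, but since we report the sum of $n$ terms in $[0,1]$ rather than an average, the rescaled noise variance is $O(1/\epsilon^2)$, matching the central-model error and dominating the $O(1)$ quantisation contribution.

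Finally I would apply Theorem~\ref{thm:ikos-corrupted} to each of the $m$ alternating-shuffler instances to conclude that the server's joint view is statistically $\sigma$-close to the ideal view that reveals only the sum, provided $m = O\!\left(1 + (\sigma + \log_2 q)/\log_2 n\right)$. Setting $\sigma = \log_2(1/\delta)$ and noting $\log_2 q = O(\log n)$ collapses this to $m = O(1 + \log(1/\delta)/\log n)$, each message being an element of $\mathbb{Z}_q$ of bit-length $O(\log n)$. The main obstacle is parameter bookkeeping: one has to check simultaneously that (a) $q$ is large enough to avoid modular wrap-around except with probability $\delta$, (b) the hypothesis $(1-\gamma)\sqrt{n} - (\sigma+\log n)^{1/2}n^{1/4} \geq 19$ of Theorem~\ref{thm:ikos-corrupted} is consistent with the chosen $\sigma$ and $q$, and (c) the quantisation scale $1/\sqrt{n}$ is fine enough that the bias is dominated by the DP noise in the final MSE — all of which hold for $n$ sufficiently large and $\delta$ not super-polynomially small.
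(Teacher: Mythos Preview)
Your proposal is correct and follows the paper's approach exactly: the paper simply invokes the reduction of Balle et al.~\cite{BalleBGN20} (Theorem 5.2), specifies quantisation to multiples of $1/\sqrt{n}$, modulus $q=2n\sqrt{n}$, and distributed geometric noise via negative binomials, then reads off $m = O(1+\log(1/\delta)/\log n)$ from the IKOS theorem with $\sigma=\log_2(1/\delta)$ and $\log_2 q = O(\log n)$. One small bookkeeping slip: the integer sum has sensitivity $\sqrt{n}$, so the geometric noise on it must have variance $O(n/\epsilon^2)$ (not $O(1/\epsilon^2)$), and rescaling by $1/\sqrt{n}$ then gives the claimed $O(1/\epsilon^2)$ MSE directly---your detour through $O(1/(n\epsilon^2))$ and the ``sum vs.\ average'' remark is not the right accounting, though you land on the correct final answer.
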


\subsection{Alternating Shuffler with Public Randomness is not DO}\label{sec:not-do}

The lower bound in Section~\ref{sec:no-strong-amp} relies on a very structured counter example (there is a row of almost all ones and no other ones). Therefore merely applying a public uniformly random shuffle to the input, before the alternating shuffler is applied, is enough to break that counter example. Of course the structured input may retain its structure through the shuffle but the probability of that (at least for the specific structure used in the counterexample) is negligible for moderately large $n$.
We do not know whether $\alttwop$ provides strong amplification. One approach to proving that it is would be to prove that it is DO with sufficiently small parameters. However we are able to show that that approach won't work.

\begin{theorem}\label{thm:no-do}
$\alttwop$ is not $DO$ for any $\epsilon$ and any
$\delta < 1 - \frac{2}{n+1}$.
\end{theorem}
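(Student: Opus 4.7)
The plan is to exhibit a specific transposition $(y,y')$ together with a distinguishing event $E$ satisfying $\Pr[\alttwop(y)\in E]\ge 1-2/(n+1)$ and $\Pr[\alttwop(y')\in E]=0$. Such an event refutes $(\epsilon,\delta)$-DO for every $\epsilon$ and every $\delta$ strictly below this bound, because the indistinguishability inequality $\Pr_y[E]\le e^{\epsilon}\Pr_{y'}[E]+\delta$ then collapses to $\Pr_y[E]\le\delta$.

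I would take $y=(1,2,\ldots,n)$ with pairwise distinct values and let $y'$ differ from $y$ by a single swap at positions $i,j$. Distinct values allow the adversary, given the public $\pi$ together with the output grid, to read off the full shuffler-induced bijection from starting cells to final cells. The key structural observation is an invariant of $\altshufflertwo{\pi,r}$: in every realization of its private randomness, each line (row or column, by the symmetry $h=w=\sqrt n$) of the final grid contains exactly one value from each ``original row'' of the starting grid arranged by $\pi$. This follows because the first row-shuffle keeps values within their original row, so after transposition each row of the working grid is a whole ``original column'' and has one value from each original row; the subsequent shuffle and transpose only rearrange within those lines without mixing them.

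Define $E$ as ``the output is consistent with $y$'s row partition but not with $y'$'s''. By the invariant, $\Pr_{y'}[E]=0$ automatically. Under $y$, $E$ fails precisely when the output is consistent with \emph{both} partitions, which requires either (a) $\pi^{-1}(i),\pi^{-1}(j)$ lie in the same original row, making the two partitions coincide, or (b) they lie in different rows but the private randomness nevertheless produces an output in the intersection of the two supports. Case (a) has an elementary probability under uniform $\pi$, and the main computation -- which I view as the central obstacle -- is a counting argument on the $((\sqrt n)!)^{2\sqrt n}$ equiprobable shuffler configurations showing that the conditional probability of case (b) equals $1/\sqrt n$: since only two original rows differ between $y$ and $y'$, compatibility with both sets of row multisets in the affected rows forces that row's contents to be drawn in a single one out of $\sqrt n$ equally likely ways. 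Summing the two contributions and simplifying yields the stated bound.
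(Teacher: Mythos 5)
Your proposal is correct in substance and takes essentially the same route as the paper: both arguments rest on the invariant that after two rounds each final \emph{column} of the grid is a transversal of the original rows (the paper phrases this as ``the top-left entry cannot share a final column with $2,\dots,k$''), both reduce the failure to distinguish to the two events ``the public $\pi$ places the swapped positions in the same row'' and ``the private row-shuffles send the two swapped values to the same final column,'' and both multiply the corresponding probabilities. Two small caveats: the invariant should be asserted only for the columns of the final grid, not its rows (the second round's row-shuffles can align several values from the same original row into the same position index, so final rows need not be transversals) — this does not affect your argument, which only uses columns; and your correct computation, $\frac{\sqrt{n}-1}{\sqrt{n}+1}\cdot 1 $ coming from $\frac{n-\sqrt{n}}{n-1}\cdot\frac{\sqrt{n}-1}{\sqrt{n}}$, actually yields $\delta < 1-\frac{2}{\sqrt{n}+1}$ rather than the stated $1-\frac{2}{n+1}$; the paper's own proof appears to make the same $n$-for-$\sqrt{n}$ substitution in its intermediate probabilities, so this discrepancy is inherited from the theorem statement rather than introduced by you.
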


\section{Implementations of (Approximate) Shuffling}
\label{sec:implementations}

As discussion in Section~\ref{sec:overview}, existing algorithms for shuffling and DO approximations of shuffling either require linear communication from each client or involve an impractically large number of rounds.
In this section we first propose a protocol for true shuffling that requires only $O(\sigma)$ communication by the average client, but does so at the expense of a few clients still doing $O(n)$ work. We call this protocol {\em amortized shuffler}. Next, using the amortized shuffler as a subprotocol, we show how the alternating shuffler can be implemented with $O(\sqrt{n})$ communication for each client and with a number of rounds that is at least plausible for production. In the sequel, we refer to the protocol implementing the alternating shuffler functionality as the {\em alternating shuffler protocol}.    

\paragraph{Communication model.}
We assume a server, denoted $\server$ and $n$ clients organized in a start network.
Our protocol starts with a setup round where clients
share a public key with the server,
who subsequently informs clients of the public keys of the clients
with whom they need to communicate privately. From then on the server can act as a relay for private communication between clients. From a security perspective, as we consider a semi-honest 
threat model for the server and non-adaptive client corruptions, this corresponds to the setting where clients can
communicate privately, but {\em the adversary observes the communication pattern}.

\paragraph{Round advancement.}
Our protocol proceeds in rounds, where each round is initiated by the server
with a message to the clients that have not dropped out so far.
The server then receives responses
from clients during the duration of the round (a predefined timeout, or until all alive clients report or explicitly drop out)
and then initiates the next round by sending the next messages.
The server discards all client messages that arrive late, i.e. intended 
for previous round, or malformed. Nevertheless these messages are incorporated to the
server's view when proving security.

\subsection{Components}

We start by introducing three building blocks we will need for both of these protocols. The server $\server$ coordinates the computation among $n$ clients organized in pairwise disjoint committees $C_1\ldots, C_m$. By $c_{i,j}$ be denote the client that is the $j$ member of the $i$th committee. We assume that clients have shared a public key with the server, who subsequently informs clients of the public keys of the clients in their committee and the next committee (clients in committee $C_m$ only receive keys for their committee). Note that ensuring that clients talk to a sublinear number of neighbors is required for our goal of sublinear communication.
Recall that we aim to withstand
a semi-honest server possibly colluding
with up to $\gamma n$ {\em fully malicious} clients. In terms of correctness, our protocols enjoy {\em guaranteed output delivery} to the server, as long as no more than 
a fraction $\alpha$ of the clients
drop out or misbehave.

\subsubsection{Distributed (and replicated) key agreement} 
The first component is a secure protocol for distributing {\em independent} Shamir secret sharings of {\em the same} randomly generated secret $\sk$ across each of a large number of disjoint committees (where $\sk$ is only recoverable if a threshold $t$ number of clients {\em in the same committee} reveal their shares). Our protocol also simultaneously outputs $g^\sk$ in a group of our choice (in which discrete logs are hard) to the server.
This allows any committee to decrypt a subset of ciphertexts, which in combination with provable local shuffling and re-encryption allows to amortize mixing work across clients.
This is achieved while preventing corrupted clients from different committees to collude for decryption,
as the sharings are independent.
A key aspect of our protocol is that clients only incur $O(\committeesize)$ costs (both communication and computation), where $\committeesize$ denotes the committee size. This is a crucial step towards shuffling with sublinear per-client costs. We believe this protocol is novel.

Our protocol is presented in Figure~\ref{Protocol:KeyAgreement}. The basic idea is to have clients in the first committee $C_1$ generate the output key $\sk\in \mathbb{Z}_p$ by exchanging Shamir shares of a locally generated random number. This is the standard approach if we wanted a different key per committee. The naive approach from here would be to have $C_1$ re-share shares of $\sk$
across committees, but this would 
require too much communication for
clients in $C_1$. An alternative would be to have $C_1$ 
re-share $\sk$ with a small number of committees who with in turn re-share with the rest, but this introduces
the need for $\Omega(1)$ communications rounds. In contrast, our solution has $O(\committeesize)$ cost per clients and $4$ rounds. The idea is as follows: Every committee $C_i$ computes a share of a random secret number $s_i\mathbb{Z}_p$ and secret-shares it both across clients in $C_i$ and $C_{i+1}$ (steps \ref{item:share-shares-begin}-\ref{item:share-shares-end}).
For each committee $i>1$, clients then securely reveal to the server the offset $s_i - s_{i-1}$,
who replies back with the offset $d_i := s_i - s_1$ (step~\ref{item:send-offset}). Then clients in $C_i$, for $i>1$, simply update their share of $s_i$ to be shares of $s_i - d_i = s_i - (s_i - s_1) = s_1$, as intended.

The above idea is robust to clients dropping out (up to a predefined threshold $t$) thanks to Shamir sharing.
However, we also need to handle corrupted clients that might distribute incorrect shares. This allows to attain guaranteed output delivery for the server.
For this purpose, our protocol makes use of the
verifiable secret sharing construction
due to Feldman~\cite{FeldmanVSS}, which ensures that a few malicious parties amongst the committees can't prevent its completion by providing invalid shares.
The idea in Feldman's protocol is to, given an appropriate group $\mathbb{G}$ with generator $g$, have the client provide commitments $g^{s}, g^{a_1} \ldots, g^{a_t}$ to coefficients of the random polynomial $P(x) := s + \sum{i=0}^{t} a_ix^i$ used for Shamir sharing a secret $s$ (a uniform random value in $\mathbb{Z}_p$ in our case).
This is done in step~\ref{item:share-shares-begin} of the protocol, for both polynomials $P$ and $Q$.
Then the server can derive a commitment to a given share homomorphically by manipulating commitments (step~\ref{item:send-commitments}) as $g^{P(j)} = g^s\prod_{i = 1}^{t}g^{a_i j^i}$ (for the $j$th share). Recipient clients for the share can then check that the received share and the commitment to it computed by the server match (step~\ref{item:check-commitments}), thus verifying honest sharing of the secret $s$.
If clients find invalid shares they report them to the server in step~\ref{item:check-commitments}, who checks that they're indeed invalid (this important to prevent malicious clients to frame other clients).
This can be easily done by checking the reported share against the ciphertext that the server collected in step~\ref{item:relay-messages}. Note corrupt clients might not report bad shares, but this is equivalent to using bad shares during decryption, which we discuss and address next.

\paragraph{Distributed (verifiable) decryption.} The second component (Figure~\ref{Protocol:Decryption}) is a protocol that allows the parties within one committee to enable the server to decrypt ElGamal ciphertexts encrypted with public key $g^\sk$. This can be done classically requiring each client to receive one group element, do one exponentiation, and send one group element for each ciphertext their committee decrypts. To see how consider an ElGamal ciphertext $c = (mh^\sk, h)$, for message $m$ and group element $h\in\G$, and recall that each cohort members $i$ hold a shamir share $\sk_i$ of $\sk$. The server can just send $h$ to all cohort members,
who then reply with $v_i := h^{\sk_i}$ (see step~\ref{item:partial-dec} in Figure~\ref{Protocol:Decryption}). Since polynomial interpolation is a linear
operation, the server can reconstruct $h^{\sk}$ from the $v_i$'s using group operations and recover $m$ (step~\ref{item:dec}). Note that malicious clients may not construct $v_i$ correctly. To address this, the protocol includes a zero-knowledge proof that the client has behaved correctly, i.e. showing that the reported share match the commitments obtained by the server in the key generation stage (Figure~\ref{Protocol:KeyAgreement}). This is a very efficient variant of a Schnorr proof that requires a single group element per client (and
thus  roughly doubles communication from the client to the server). While it requires an extra round, it can be removed using the Fiat-Shamir heuristic.

\paragraph{Verifiable shuffles.} The third
component (Figure~\ref{Protocol:CTShuffle})
allows the server to obtain a random shuffle of a set of ciphertexts. The intuition behind the protocol is that server sends the set of ciphertexts to clients in turns, to ensure (up to negligible probability) that at least one honest client gets a chance to properly shuffle the ciphertexts.
To ensure correctness in the face of malicious clients, 
our protocol employs a zero-knowledge proof
due to Bayer and Groth~\cite{BayerG12}. Their protocol allows to permute and re-encrypt a collection of ElGamal ciphertexts whilst also providing a zero-knowledge proof that the output is a re-encrypted permutation of the input. For $N$ ciphertexts this protocol can be implemented with communication overhead any fixed constant times $N$ asymptotically and $O(N)$ computations. Alternatively it can provide a communication overhead of only $O(\sqrt{N})$ if one is willing to use $O(N\log(N))$ computation. They report achieving $0.7MB$ of communication with a little over two minutes of computation time to process $10^5$ ciphertexts (including one verification), we expect these numbers to decrease roughly linearly with $N$ down to about $N=1000$. The protocol is compatible with the Fiat-Shamir heuristic which we suggest using to keep interaction to a minimum in out protocol. We also note that some of the required exponentiations can be done before the prover has the data.

\begin{figure}[t]
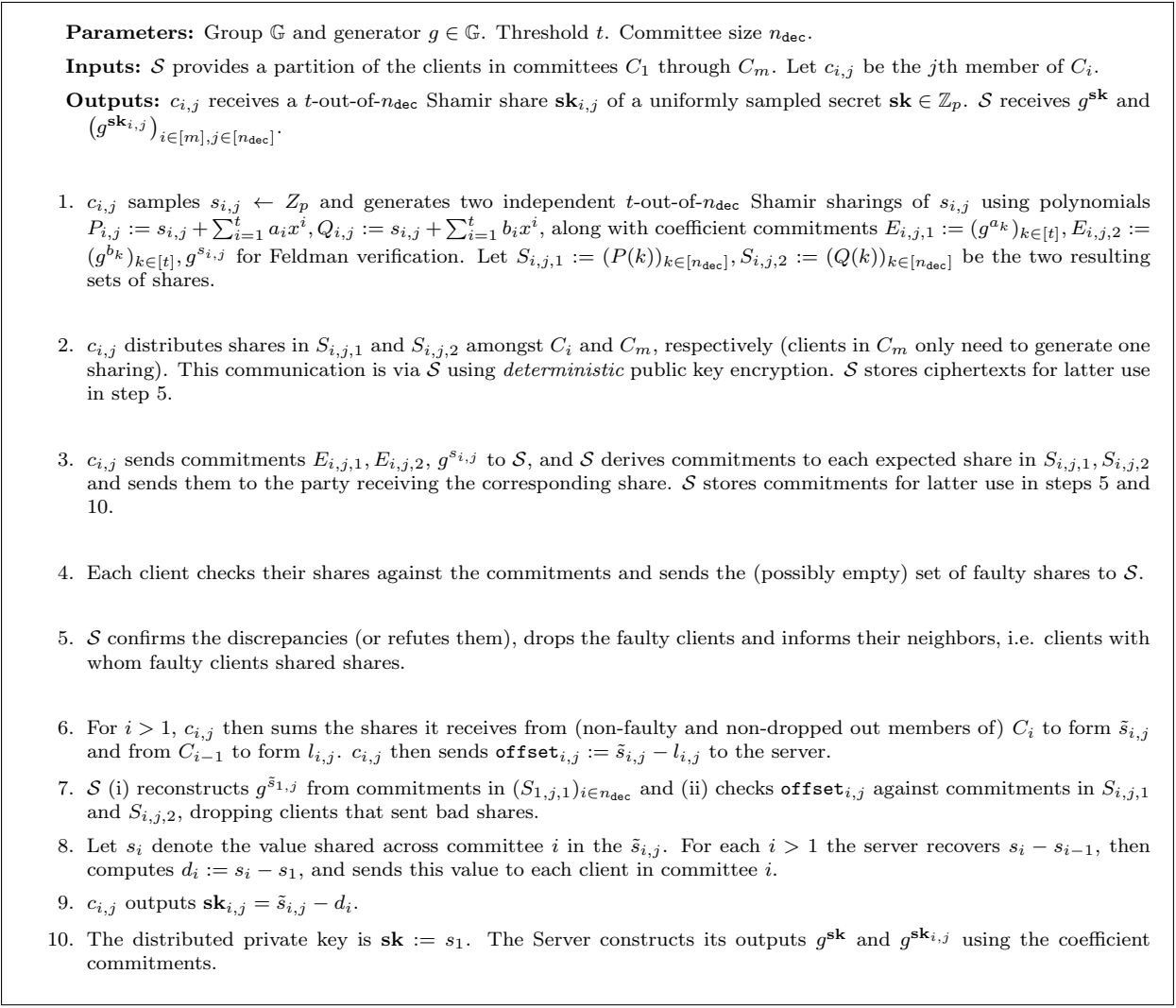

\footnotesize
\begin{framed}
\begin{enumerate}
\item[] \hspace{-0.4cm} \textbf{Parameters:}
Group $\G$ and generator $g\in \G$. Threshold $t$. Committee size $\committeesize$.
\item[] \hspace{-0.4cm} \textbf{Inputs:}
$\server$ provides a partition of the clients in committees $\Committee_1$ through $\Committee_m$. Let $c_{i,j}$ be the $j$th member of $\Committee_i$.
\item[] \hspace{-0.4cm} \textbf{Outputs:}
$c_{i,j}$ receives a $t$-out-of-$\committeesize$ Shamir share $\sk_{i,j}$ of a uniformly sampled secret $\sk\in \mathbb{Z}_p$. $\server$ receives $g^{\sk}$ and $\big(g^{\sk_{i,j}}\big)_{i\in[m],j\in[\committeesize]}$.
\\[2ex]
\item $c_{i,j}$ samples $s_{i,j}\leftarrow {Z}_p$ and generates two independent $t$-out-of-$\committeesize$ Shamir sharings of $s_{i,j}$ using polynomials $P_{i,j} := s_{i,j} + \sum_{i=1}^{t} a_ix^i, Q_{i,j} := s_{i,j}+\sum_{i=1}^{t} b_ix^i$, along with coefficient commitments $E_{i,j,1} := (g^{a_k})_{k\in [t]}, E_{i,j,2} := (g^{b_k})_{k\in [t]}, g^{s_{i,j}}$ for Feldman verification.
Let $S_{i,j,1} := (P(k))_{k\in [\committeesize]},S_{i,j,2}:=(Q(k))_{k\in [\committeesize]}$ be the two resulting sets of shares.\\[1ex]\label{item:share-shares-begin}
\item $c_{i,j}$ distributes shares in $S_{i,j,1}$ and $S_{i,j,2}$ amongst $\Committee_i$ and $\Committee_m$, respectively (clients in $\Committee_m$ only need to generate one sharing). This communication is via $\server$ using {\em deterministic} public key encryption. $\server$ stores ciphertexts for latter use in step~\ref{step:resolve-discrepancies}.\\[1ex]\label{item:relay-messages}
\item $c_{i,j}$ sends commitments $E_{i,j,1} ,E_{i,j,2}$, $g^{s_{i,j}}$ to $\server$, and $\server$ derives commitments to each expected share in $S_{i,j,1},S_{i,j,2}$ and sends them to the party receiving the corresponding share. $\server$ stores commitments for latter use in steps~\ref{step:resolve-discrepancies} and ~\ref{step:output}.\\[1ex]\label{item:send-commitments}
\item Each client checks their shares against the commitments and sends the (possibly empty) set of faulty shares to $\server$.\\[1ex]\label{item:check-commitments}
\item $\server$ confirms the discrepancies (or refutes them), drops the faulty clients and informs their neighbors, i.e. clients with whom faulty clients shared shares.\label{step:resolve-discrepancies}\\[1ex]
\label{item:share-shares-end}
\item For $i>1$, $c_{i,j}$ then sums the shares it receives from (non-faulty and non-dropped out members of) $\Committee_i$ to form $\tilde{s}_{i,j}$ and from $\Committee_{i-1}$ to form $l_{i,j}$. $c_{i,j}$ then sends $\texttt{offset}_{i,j} := \tilde{s}_{i,j}-l_{i,j}$ to the server. 
\item $\server$ (i) reconstructs $g^{\tilde{s}_{1,j}}$ from commitments in $(S_{1,j,1})_{i\in\committeesize}$ and (ii) checks $\texttt{offset}_{i,j}$
against commitments in $S_{i,j,1}$ and $S_{i,j,2}$, dropping clients that sent bad shares.
\item Let $s_i$ denote the value shared across committee $i$ in the $\tilde{s}_{i,j}$. For each $i>1$ the server recovers $s_i-s_{i-1}$, then computes $d_i := s_i-s_1$, and sends this value to each client in committee $i$.\label{item:send-offset}
\item $c_{i,j}$ outputs $\sk_{i,j}=\tilde{s}_{i,j}-d_i$.
\item The distributed private key is $\sk:=s_1$. The Server constructs its outputs $g^{\sk}$ and $g^{\sk_{i,j}}$ using the coefficient commitments.\label{step:output}
\end{enumerate}
\end{framed}
\caption{\small Shared Key Agreement}
\label{Protocol:KeyAgreement}
\end{figure}

\begin{figure}[t]
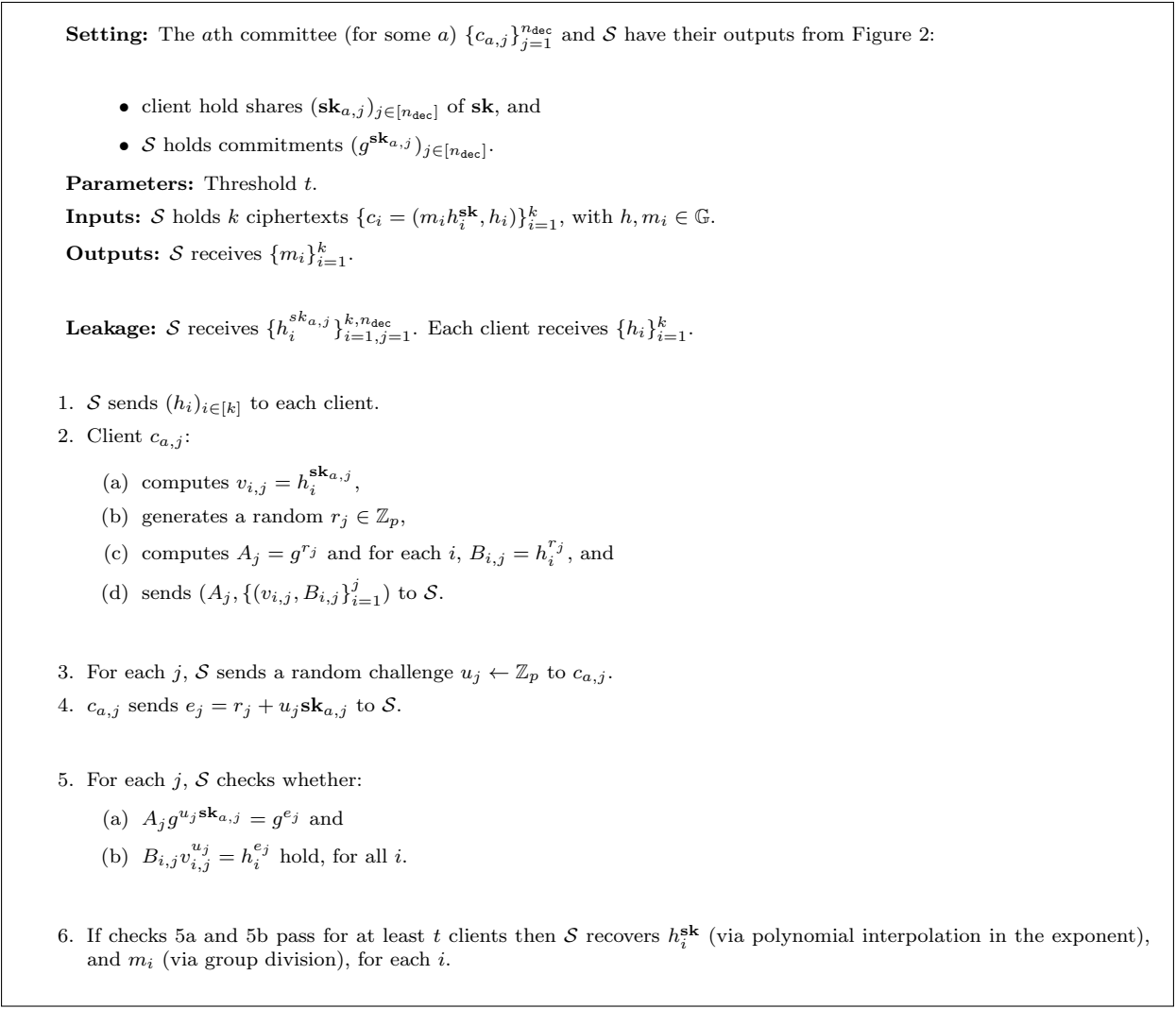

\footnotesize
\begin{framed}
\begin{enumerate}
\item[] \hspace{-0.4cm} \textbf{Setting:}
The $a$th committee (for some $a$) $\{c_{a,j}\}_{j=1}^{\committeesize}$ and $\server$ have their outputs from Figure~\ref{Protocol:KeyAgreement}:\\[1ex]
\begin{itemize}
    \item client hold shares $(\sk_{a,j})_{j\in[\committeesize]}$ of $\sk$, and 
    \item $\server$ holds commitments $(g^{\sk_{a,j}})_{j\in[\committeesize]}$.
\end{itemize}
\item[] \hspace{-0.4cm} \textbf{Parameters:}
Threshold $t$. \item[] \hspace{-0.4cm} \textbf{Inputs:}
$\server$ holds $k$ ciphertexts $\{c_i = (m_i h_i^{\sk},h_i)\}_{i=1}^k$, with $h,m_i\in \G$.
\item[] \hspace{-0.4cm} \textbf{Outputs:}
$\server$ receives $\{m_i\}_{i=1}^k$.\\[2ex]
\item[] \hspace{-0.4cm} \textbf{Leakage:}
$\server$ receives $\{h_i^{sk_{a,j}}\}_{i=1,j=1}^{k,\committeesize}$. Each client receives $\{h_i\}_{i=1}^k$.
\\[2ex]
\item $\server$ sends $(h_i)_{i\in [k]}$ to each client.
\item Client $c_{a,j}$:
\begin{enumerate}
    \item computes $v_{i,j} = h_i^{\sk_{a,j}}$, \label{item:partial-dec}
    \item generates a random $r_j\in \mathbb{Z}_p$,
    \item computes $A_j = g^{r_j}$ and for each $i$, $B_{i,j} = h_i^{r_j}$, and
    \item sends $(A_j,\{(v_{i,j},B_{i,j}\}_{i=1}^j)$ to $\server$.\\[2ex]
\end{enumerate}
\item For each $j$, $\server$ sends a random challenge $u_j\leftarrow \mathbb{Z}_p$ to $c_{a,j}$.
\item $c_{a,j}$ sends $e_j = r_j+u_j\sk_{a,j}$ to $\server$.\\[2ex]
\item For each $j$, $\server$ checks whether:
\begin{enumerate}
    \item $A_jg^{u_j\sk_{a,j}}=g^{e_j}$ and \label{item:check-1}
    \item $B_{i,j}v_{i,j}^{u_j} = h_i^{e_j}$ hold, for all $i$.\label{item:check-2}\\[2ex]
\end{enumerate}
\item If checks~\ref{item:check-1} and~\ref{item:check-2} pass for at least $t$ clients then $\server$ recovers $h_i^{\sk}$ (via polynomial interpolation in the exponent),
and $m_i$ (via group division), for each $i$. \label{item:dec}
\end{enumerate}
\end{framed}
\caption{\small Shared Key Decryption}
\label{Protocol:Decryption}
\end{figure}

The following theorem states our security guarantees for 
the key generation and decryption protocols. The result is stated in terms of a per-committee guarantee, and parameterized by a threshold $t$ which will be chosen to optimize the resulting theorems. As discussed above, we have security if no more than a given fraction of the clients is malicious, and additionally correctness in executions where sufficiently many honest clients follow the protocol.
Our proofs
are in the standard simulation-based security model~\cite{howtosimulate} of multi-party computation. This theorem is proved in Appendix~\ref{sec:appendix-implementations}.

\begin{theorem}\label{thm:sharedkeyprotocols}
  The protocols of Figures~\ref{Protocol:KeyAgreement}~and~\ref{Protocol:Decryption} securely compute (with abort) the functionalities described by their inputs and outputs (and in the decryption case leakage) against an adversary consisting of a semi-honest server and up to $t-1$ malicious clients in each committee. They also guarantee output if at least $t$ clients in each committee follow the protocol.
\end{theorem}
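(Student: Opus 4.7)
The plan is to establish security in the standard ideal-vs-real simulation paradigm by constructing, for each of the two stated adversary models, an efficient simulator whose output is indistinguishable from the real-world view, while extracting the corrupt parties' effective inputs to forward to the ideal functionality. I will treat the two protocols separately, and within each I handle the two adversary structures: an all-malicious-client adversary excluding the server, and the harder case of a coalition of the semi-honest server with up to $t-1$ malicious clients per committee.

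For Figure~\ref{Protocol:KeyAgreement}, the core observation is that Shamir sharing with threshold $t$ is perfectly hiding against any $t-1$ shares, and Feldman commitments leak nothing beyond the group elements $g^{s_{i,j}}$ and the derived share commitments, which are deterministic functions of those and the public evaluation points. I would construct the simulator by (i) sampling uniform randomness for honest parties, (ii) running the honest-party code against the adversary, and (iii) extracting corrupt parties' effective committed values from the commitments they broadcast in step~\ref{item:share-shares-begin}. The delicate step is simulating the cross-committee offset reveal: for $i>1$, the value $s_i - s_{i-1}$ is, conditioned on the public output $g^{\sk}$, uniform in $\mathbb{Z}_p$ because $s_i$ is fresh honest randomness, so the simulator first receives $g^{\sk}$ from the ideal functionality and then samples each offset uniformly. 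Security of the encrypted share delivery through the server in step~\ref{item:relay-messages} reduces to the IND-CPA security of the public-key encryption scheme used, ensuring the server learns no share it is not already given via commitments.

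For Figure~\ref{Protocol:Decryption}, the simulator additionally must produce transcripts of the Schnorr-style proofs on behalf of honest clients. This is handled by the standard honest-verifier ZK simulator: sample the response $e_j$ and challenge $u_j$ uniformly, then set $A_j = g^{e_j} (g^{\sk_{a,j}})^{-u_j}$ and $B_{i,j} = h_i^{e_j} v_{i,j}^{-u_j}$, where the partial decryptions $v_{i,j} = h_i^{\sk_{a,j}}$ for honest clients are supplied by the ideal functionality as declared leakage. Indistinguishability of these transcripts from real ones follows from the perfect honest-verifier ZK property of Schnorr; together with the unchanged commitments from the setup, the whole decryption view is simulatable. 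Malicious deviation in the clients' $v_{i,j}$ is detected by the verification in steps~\ref{item:check-1}--\ref{item:check-2}, and extraction of a corrupt client's effective share follows from the standard Schnorr knowledge extractor, completing the input-extraction requirement.

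Guaranteed output delivery when at least $t$ honest clients per committee follow the protocol follows from the verifiability of Feldman VSS: invalid shares are flagged in step~\ref{item:check-commitments} and adjudicated by the server using the stored encrypted messages and commitments, so malicious clients are dropped without affecting the reconstructible sharing held by the honest majority. The main obstacle in the formal write-up will be the joint simulation across committees: one must show that the offsets $d_i$, together with all per-committee commitments released to the server, carry no information about $\sk$ beyond $g^{\sk}$, and that extraction from malicious clients who may attempt to send inconsistent shares to different recipients still yields a single well-defined contribution --- handled by invoking the binding property of Feldman commitments, which pins each sharer to a unique polynomial before any honest share is revealed.
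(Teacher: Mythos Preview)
Your proposal is correct and follows essentially the same route as the paper's proof: both rely on Shamir hiding against $t-1$ shares, Feldman binding to pin malicious sharers to a unique polynomial (with a common constant term for $P_{i,j}$ and $Q_{i,j}$, since only one $g^{s_{i,j}}$ is committed), the observation that the offsets $s_i-s_{i-1}$ are independent of $s_1=\sk$, and Schnorr soundness/zero-knowledge for the decryption step. Your write-up is more explicitly simulation-based than the paper's informal argument, and your use of the standard honest-verifier ZK simulator for the batched Schnorr transcript is cleaner than the paper's appeal to DDH to argue that $e_j$ is masked; note also that the all-malicious-clients-without-server case you mention is not part of this theorem's statement.
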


\begin{figure}[t]
\footnotesize
\begin{framed}
\begin{enumerate}
\item[] \hspace{-0.4cm} \textbf{Setting:}
A collection of $\shufflecommitteesize$ clients. Typically $\shufflecommitteesize = O(\sigma)$ for a statistical security parameter $\sigma$.
\item[] \hspace{-0.4cm} \textbf{Parameters:}
Dropout limit $d$.
\item[] \hspace{-0.4cm} \textbf{Inputs:}
The server $\server$ holds ElGamal ciphertexts
$S:=\{m_i h_i^{\sk},h_i\}_{i\in[k]}$ 
and the corresponding public key $\pk=g^\sk$.
\item[] \hspace{-0.4cm} \textbf{Outputs:}
$\server$ receives a shuffled re-encryption of ciphertexts in $S$.
\item Let $L$ be the collection of clients
in a random order chosen by $\server$.
\item For $\shuffleclient\in L$ in the collection of clients:
\begin{enumerate}
    \item $\server$ sends the current ciphertexts set $S$ to $\shuffleclient$.
    \item Client $\shuffleclient$ \label{item:client-shuffles}
    \begin{itemize}
        \item sets $S_{\shuffleclient} := \pi_{\shuffleclient}(\texttt{re-encrypt}(S))$, where $\pi$ to $S$ is a uniformly random permutation, and
        \item constructs a proof $P$ that it has done so (c.f. Bayer \& Groth~\cite{BayerG12}).
        \item $\shuffleclient$ sends $(S_{\shuffleclient}, P)$ to $\server$.
    \end{itemize}
    \item If $\server$ receives $(S_{\shuffleclient},P)$ successfully, and $P$ is valid, then it sets $S:=S_{\shuffleclient}$. Otherwise it leaves $S$ unchanged and drops $\shuffleclient$. %
    \item If $\shufflecommitteesize-d$ clients have provided valid shuffles go to step~\ref{item:output-S}.
    \item If $d+1$ clients have failed to provide valid shuffles abort the protocol.
\end{enumerate}
\item $\server$ outputs the current set of ciphertexts $S$.\label{item:output-S}
\end{enumerate}
\end{framed}
\caption{\small Ciphertext Shuffle}
\label{Protocol:CTShuffle}
\end{figure}

\subsection{Amortized Shuffler}

With these three building blocks in place there is a simple protocol to provide the server with shuffled inputs. First we perform the distributed key agreement, then each client encrypts their input with the resulting key, then a few of the clients are selected to each shuffle the ciphertexts and prove they have done so correctly, then the server gets the clients to decrypt the resulting values.
The shuffler also adds an offset to the public key that is used. This is done so as to protect against another possible adversary who controls more than $\gamma$ fraction of clients so long as the server is honest.
We describe this protocol in Figure~\ref{Protocol:Amortized},
which we call {\em amortized shuffler}: the reason is that the protocol amortizes shuffling costs across clients, 
resulting in sublinear, i.e. $O(\log n)$,
costs for the average clients. First, we discuss the security and correctness properties of the protocol, and discuss costs in detail next. The following theorem is proved in Appendix~\ref{sec:appendix-implementations}.

\begin{figure}[t]
\footnotesize
\begin{framed}
\begin{enumerate}
\item[] \hspace{-0.4cm} \textbf{Parameters:}
Number of shufflers $\shufflecommitteesize$, dropout limit $d$.
\item[] \hspace{-0.4cm} \textbf{Inputs:}
Each client $c$ has an input $m_c$.
\\[1ex]
\begin{center}
    \underline{Distributed Key Generation}
\end{center}
\item $\server$ partitions clients into $m$ committees of size $\committeesize$ uniformly at random. All parties then perform the key generation in Figure~\ref{Protocol:KeyAgreement}. $\server$ generates $\sk'$ at random, sets $\sk''=\sk + \sk'$ and sends the resulting key $g^{\sk''}=g^{\sk}g^{\sk'}$ to all clients to use as a public key.
\item Each client $c$ computes the ciphertext  $(m_cg^{\sk'' r},g^r)$ and sends it to the server.
\\[1ex]
\begin{center}
    \underline{Distributed Verifiable Shuffling}
\end{center}
\item $\server$ runs the protocol of Figure~\ref{Protocol:CTShuffle} with $\shufflecommitteesize$ randomly selected clients and dropout limit $d$, all the ciphertexts and the new public key $\pk'=g^{\sk+t}$.\label{item:shuffle-amortized}
\\[1ex]
\begin{center}
    \underline{Distributed Decryption}
\end{center}
\item The server uses key homomorphism to return the key to $\sk$ from $\sk''$.
\item $\server$ partitions ciphertexts into $m$ groups $\{G_i\}_{i=1}^m$ and for each $i$ runs Figure~\ref{Protocol:Decryption} with $\Committee_i$ and $G_i$.
\item $\server$ takes all the resulting plaintexts as output.

\end{enumerate}
\end{framed}
\caption{\small Amortized Shuffler}
\label{Protocol:Amortized}
\end{figure}

\begin{theorem}\label{thm:amortizedsecurity}
The Amortized shuffler (Figure~\ref{Protocol:Amortized}) securely implements the shuffling functionality against an adversary consisting of a semi-honest $\server$, up to $\gamma n$ malicious clients, and the ability to drop honest clients actively, with statistical security parameter $\sigma$ given by the smaller of 
\begin{equation*}
    -\log_2(m) + 2\log_2(e)(t/\committeesize-\gamma)^2\committeesize - 1
\end{equation*}
and
\begin{equation*}
    2\log_2(e)(1-d/\shufflecommitteesize-\gamma)^2\shufflecommitteesize) - 1.
\end{equation*}

Furthermore, so long as the server is semi-honest and at least 
$(1-\alpha)n$ non-actively selected clients follow the protocol without dropping out, the probability of aborting is at most $2^{-\eta}$ where $\eta$ is the smaller of
\begin{equation*}
    -\log_2(m) + 2\log_2(e)((t+1)/\committeesize-(1-\alpha))^2\committeesize - 1
\end{equation*}
and
\begin{equation*}
    2\log_2(e)((d+1)/\shufflecommitteesize-\alpha)^2\shufflecommitteesize) - 1.
\end{equation*}
The protocol is also secure against an adversary controlling an arbitrary number of clients, so long as the server is honest. Using the Fiat-Shamir heuristic to create all uniform challenges the protocol runs in $\shufflecommitteesize + 5$ rounds.
\end{theorem}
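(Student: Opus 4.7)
The plan is to build a simulator in the ideal/real paradigm and bound its failure probability via Chernoff concentration. I would invoke Theorem~\ref{thm:sharedkeyprotocols} as a black box for the key-generation and decryption sub-protocols, so the remaining work is (i) conditioning on a good event about how the corrupt clients and dropouts are distributed across the $m$ key-holding committees and across the $\shufflecommitteesize$ selected shufflers, and (ii) simulating the verifiable shuffle phase of Figure~\ref{Protocol:CTShuffle} using the zero-knowledge and soundness of the Bayer--Groth argument. The simulator runs as follows: it simulates the distributed key agreement using the simulator promised by Theorem~\ref{thm:sharedkeyprotocols} (embedding a freshly sampled $\sk''$ whose discrete log is unknown to the corrupt clients thanks to the server-chosen offset $\sk'$); it then asks the ideal shuffling functionality for the honest clients' output multiset, encrypts those outputs under $\sk''$ as the ``final'' shuffled ciphertexts, and works backwards through the $\shufflecommitteesize$ shuffling rounds by running the Bayer--Groth zero-knowledge simulator for honest shufflers and honestly executing the verifier with extracted permutations for corrupt shufflers; finally it invokes the decryption simulator from Theorem~\ref{thm:sharedkeyprotocols} to hand the server the target plaintexts.

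For this simulation to be indistinguishable, two bad events must be excluded. The first is that some key-holding committee contains $\geq t$ corrupt clients (which would let the adversary reconstruct $\sk$ and thereby $\sk''=\sk+\sk'$, breaking ElGamal semantic security). Since clients are partitioned uniformly at random into $m$ committees of size $\committeesize$ and the corrupt fraction is $\gamma$, the expected number of corrupt members per committee is $\gamma\committeesize < t$; a standard multiplicative Chernoff bound on a hypergeometric gives probability at most $\exp(-2(t/\committeesize-\gamma)^2\committeesize)$ per committee, and a union bound across the $m$ committees yields the first term $-\log_2 m + 2\log_2(e)(t/\committeesize-\gamma)^2\committeesize$, with the additional $-1$ accounting for the combination with the second failure mode. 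The second bad event is that all $\shufflecommitteesize$ selected shufflers are corrupt, since we need at least one honest shuffler to obtain a uniformly random permutation of ciphertexts; because of the dropout limit $d$ we in fact need fewer than $\shufflecommitteesize-d$ corrupt selections. The same Chernoff argument applied to the random selection of $\shufflecommitteesize$ shufflers gives the second bound $2\log_2(e)(1-d/\shufflecommitteesize-\gamma)^2\shufflecommitteesize$.

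For the correctness statement, assuming the server is semi-honest, the only reason to abort is that some committee has fewer than $t$ live honest clients to decrypt, or that more than $d$ of the $\shufflecommitteesize$ chosen shufflers drop out or submit invalid proofs. Replacing $\gamma$ by $\alpha$ and $t$ by $t+1$, $d$ by $d+1$ in the above Chernoff calculation gives the two expressions for $\eta$; soundness of Bayer--Groth means misbehaving shufflers are caught except with negligible probability, which is absorbed into the statistical term. For the second threat model (malicious clients but honest server), the server can play the roles requiring only knowledge of the ciphertexts, and all client $\to$ client communication is end-to-end encrypted under public keys relayed by the server, so a standard hybrid removing one honest client's ciphertext at a time under IND-CPA of ElGamal shows that no collection of malicious clients learns anything about honest inputs (the key $\sk$ is never recoverable without at least one honest committee member cooperating). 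The round count is obtained by adding the $4$ rounds of Figure~\ref{Protocol:KeyAgreement}, the $1$ round of ciphertext submission, the $\shufflecommitteesize$ sequential rounds of Figure~\ref{Protocol:CTShuffle} (Fiat--Shamir collapsing the internal interaction of each shuffle proof), and the $1$ round of partial-decryption messages in Figure~\ref{Protocol:Decryption} (again using Fiat--Shamir to collapse the Schnorr challenge), for $\shufflecommitteesize+5$ rounds total.

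The main obstacle is the bookkeeping in the simulator for the shuffle phase: I need to argue that switching the honest shufflers' Bayer--Groth proofs to simulated ones and re-wiring the ``pivot'' ciphertexts so that they decrypt to the ideal-functionality output is indistinguishable from the real execution, which requires a careful hybrid combining zero-knowledge (for simulated proofs) with IND-CPA re-randomization (for replacing the re-encrypted ciphertexts) and the extractor of Bayer--Groth (to read off the permutations applied by corrupt shufflers so that the simulator can determine which honest-client ciphertext ultimately lands in which output slot). Everything else is a routine application of concentration plus the composition of the sub-protocol simulators guaranteed by Theorem~\ref{thm:sharedkeyprotocols}.
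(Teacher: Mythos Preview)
Your approach matches the paper's: identify the two bad events (some decryption committee containing $\geq t$ corrupt members; no honest shuffler surviving after accounting for the dropout limit $d$), bound each via a hypergeometric/Hoeffding tail bound plus a union bound over the $m$ committees, and handle the correctness parameter $\eta$ symmetrically with $\alpha$ in place of $\gamma$. The paper's own proof is considerably terser---it does not spell out a simulator, nor does it address the honest-server clause or the round count at all---so your extra detail on the Bayer--Groth hybrid and the second threat model is welcome; the only slip is that your round tally $4+1+\shufflecommitteesize+1$ sums to $\shufflecommitteesize+6$, so one of those messages must piggyback on an adjacent round to reach the claimed $\shufflecommitteesize+5$.
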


If we require $\sigma=40$ and $\eta=10$ with $n=10,000$ clients at most 500 of which drop out and at most 500 of which are malicious then we can achieve a 24 rounds (or 18 if no-one drops out while shuffling) a worst case communication of $1.25MB$ and an average communication of $4.35KB$. These figures correspond to an optimized numerical analysis slightly tighter than the above formulae. In our feasibility study we estimate that each shuffling client will only be required to do 2 or 3 seconds of computation, thus with fast round trip times the whole protocol could take less than a minute.  The asymptotics implied by the above are given in the following theorem.

\begin{theorem}\label{thm:amortized}
If the server is semi-honest, the fraction of malicious clients $\gamma$ and dropouts bounded by $\alpha$ with $2\gamma + \alpha < 1$, then the amortized shuffler does, with an appropriate choice of parameters, the following. For simplicity of expressions we assume $\sigma < n$.
\begin{itemize}
    \item Implement shuffling with statistical security $\sigma$ and correctness parameter $\eta$.
    \item Require at most $O(n)$ communication from any client and $O(\sigma/(1-\gamma-\alpha)^2)$ communication from the average client.
    \item Require at most $O(n)$ exponentiations from any client and $O(\sigma/(1-\gamma-\alpha)^2+\log(n))$ exponentiations from the average client.
    \item Requires $O(\sigma/(1-\gamma-\alpha)^2)$ rounds.
\end{itemize}
\end{theorem}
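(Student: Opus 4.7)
The plan is to derive all four bullets as a corollary of Theorem~\ref{thm:amortizedsecurity} by instantiating the parameters $\committeesize, t, m, \shufflecommitteesize, d$ so that the four exponent bounds are saturated at the requested security/correctness levels. Under the hypothesis $2\gamma+\alpha<1$, the intervals $(\gamma,1-\alpha)$ (for $t/\committeesize$) and $(\alpha,1-\gamma)$ (for $d/\shufflecommitteesize$) are both nonempty and each has width $\Omega(1-\gamma-\alpha)$. I would place $t/\committeesize$ and $d/\shufflecommitteesize$ at the midpoints of their respective intervals, so that each of the four quantities $t/\committeesize-\gamma$, $(1-\alpha)-(t+1)/\committeesize$, $1-d/\shufflecommitteesize-\gamma$, and $(d+1)/\shufflecommitteesize-\alpha$ is $\Theta(1-\gamma-\alpha)$.

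With these choices the four bounds in Theorem~\ref{thm:amortizedsecurity} collapse to requirements of the form $\Omega((1-\gamma-\alpha)^2\committeesize)\geq \sigma+\log_2 m$ on the committee side and $\Omega((1-\gamma-\alpha)^2\shufflecommitteesize)\geq \sigma$ on the shuffler side. Since the protocol uses $m=n/\committeesize$ committees, the $\log_2 m$ term reduces to $\log_2 n - \log_2 \committeesize$. Taking $\committeesize=\Theta((\sigma+\log n)/(1-\gamma-\alpha)^2)$ clears both committee-side bounds and $\shufflecommitteesize=\Theta(\sigma/(1-\gamma-\alpha)^2)$ clears the shuffler-side bound. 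The correctness bounds have the same quadratic shape and are dominated (for $\eta\leq\sigma$) by the same choice. Under the stated simplification $\sigma<n$ the residual $\log n$ correction is absorbed into the $\sigma$ term in the communication bullet.

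The remaining cost accounting reads off directly from Figure~\ref{Protocol:Amortized}. The round count is $\shufflecommitteesize+5=O(\sigma/(1-\gamma-\alpha)^2)$, matching the fourth bullet. Per-client cost decomposes into (a) $O(\committeesize)$ for key agreement within one's committee (Figure~\ref{Protocol:KeyAgreement}), (b) $O(1)$ to submit the encrypted input, (c) $O(n/m)=O(\committeesize)$ for partial decryption of the committee's share of ciphertexts (Figure~\ref{Protocol:Decryption}), and (d) $O(n)$ for the $\shufflecommitteesize$ clients that run the verifiable shuffle (Figure~\ref{Protocol:CTShuffle}). Item (d) is responsible for the $O(n)$ worst case; averaging it across the $n$ clients contributes only $O(\shufflecommitteesize)$, while (a)--(c) contribute $O(\committeesize)$ on average, jointly yielding $O(\sigma/(1-\gamma-\alpha)^2)$ for average communication. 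The extra $\log n$ in the average exponentiation count comes from the Lagrange-interpolation-in-the-exponent step of decryption and the Bayer--Groth verification amortized across the $\shufflecommitteesize$ shuffle rounds.

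The only genuine subtlety is the fixed-point coupling between $m$ and $\committeesize$ in the committee security bound: enlarging committees shrinks the $-\log_2 m$ penalty but simultaneously grows per-client cost. This is the step that requires a little care but resolves cleanly because $\committeesize=\Theta((\sigma+\log n)/(1-\gamma-\alpha)^2)$ satisfies the implicit inequality up to constants once $\sigma<n$, so no iterative tightening is needed and the asymptotics quoted in the theorem follow.
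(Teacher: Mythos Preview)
Your approach matches the paper's: the paper states Theorem~\ref{thm:amortized} as an immediate corollary of Theorem~\ref{thm:amortizedsecurity} (it writes only ``The asymptotics implied by the above are given in the following theorem'' and gives no further proof), and your parameter instantiation---placing $t/\committeesize$ and $d/\shufflecommitteesize$ at the midpoints of their admissible intervals and sizing $\committeesize,\shufflecommitteesize$ to saturate the four exponent bounds---is precisely the natural way to fill in the details.

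One minor correction worth flagging: the hypothesis $\sigma<n$ does \emph{not} imply $\log n=O(\sigma)$, so your claim that the residual $\log n$ is ``absorbed into the $\sigma$ term'' is not justified as stated. The honest bound on $\committeesize$ from the committee-side inequality is $\Theta((\sigma+\log n)/(1-\gamma-\alpha)^2)$, and this drives both average communication and average exponentiations; the paper's own statement is slightly loose here (the $\log n$ appears in the exponentiation bullet but not the communication bullet, despite both being dominated by $\committeesize$). Your attribution of the extra $\log n$ in exponentiations to Lagrange interpolation (performed by the server, not the clients) or to amortized Bayer--Groth work (which contributes $O(\shufflecommitteesize)$, not $O(\log n)$) is also off. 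But this is a cosmetic bookkeeping issue in the theorem's constants, not a gap in your proof strategy.
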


\subsection{Alternating Shuffler}

\begin{figure}[t]
\footnotesize
\begin{framed}
\begin{enumerate}
\item[] \hspace{-0.4cm} \textbf{Parameters:}
Number of rounds $l$, client arrangement $\pi$, height and width $h,w$ such that the number of clients $n=hw$.
\item[] \hspace{-0.4cm} \textbf{Inputs:}
Each client $c$ has an input $m_c$.
\\[1ex]
\begin{center}
    \underline{Distributed Key Generation}
\end{center}
\item The server assigns the clients into $m$ decryption committees of size $\committeesize$ uniformly at random. All parties then perform the key generation in Figure~\ref{Protocol:KeyAgreement} and the server sends the resulting key $g^\sk$ to all clients.
\item Each client $c$ computes $\Enc_{\sk}(m_c)$ and sends it to the server.
\item The server assigns the ciphertexts into an $h\times w$ array $\mat{M}$ according to the permutation $\pi$.
\\[1ex]
\begin{center}
    \underline{Distributed Verifiable Shuffling}
\end{center}
\item $\server$ generates $t$ at random and changes all ciphertexts to be encrypted with $\sk'=\sk+t$, using key homomorphism, and distributes the new public key $g^{\sk+t}$.\\[1ex]
\item $\server$ splits the clients into shuffling committees of size $\shufflecommitteesize$.\\[1ex]
\item The following is repeated $l$ times:
\begin{enumerate}
    \item $\server$ conducts $h$ parallel instances of the protocol of Figure~\ref{Protocol:CTShuffle} (one per row in $\mat{M}$). Each instance is run with one of the shuffling committees (chosen to spread the load as evenly as possible over those committees), and inputs
     $\sk'$ and ciphertexts in a row in $(\mat{M}_i)_{i\in [h]}$.\\[1ex]
    \item $\server$ replaces each row of ciphertexts in $\mat{M}$ with its shuffled version.
    \item $\server$ transposes $\mat{M}$.\\[1ex]
\end{enumerate}
\begin{center}
    \underline{Distributed Decryption}
\end{center}
\item The server uses key homomorphism to return the key to $\sk$ from $\sk'$.
\item $\server$ splits the ciphertexts in $\mat{M}$ into $m$ groups $\{G_i\}_{i=1}^m$ and for each $i$ runs Figure~\ref{Protocol:Decryption} with $\Committee_i$ and $G_i$.
\item $\server$ takes all the resulting plaintexts as output.

\end{enumerate}
\end{framed}
\caption{\small Alternating Shuffler Protocol}
\label{Protocol:Alternating}
\end{figure}

If we use the ciphertext shuffle to shuffle subsets of the values we get the protocol in Figure~\ref{fig:alternatin-shuffler} which implements the Alternating shuffle. The security considerations for this are summarized in the following theorem (proved in Appendix~\ref{sec:appendix-implementations}).

\begin{theorem}\label{thm:Alternating}
The Alternating shuffler protocol (Figure~\ref{Protocol:Alternating}) securely implements the shuffling functionality against an adversary consisting of a semi-honest $\server$, up to $\gamma n$ malicious clients and the ability to drop honest clients actively, with statistical security parameter $\sigma$ given by the smaller of 
\begin{equation*}
    -\log_2(m)+2\log_2(e)(t/\committeesize-\gamma)^2\committeesize-1
\end{equation*}
and
\begin{equation*}
    -\log_2(h\lceil l/2 \rceil + w \lfloor l/2 \rfloor)+2\log_2(e)(1-d/\shufflecommitteesize-\gamma)^2\shufflecommitteesize)-1.
\end{equation*}

Furthermore, so long as the server is semi-honest and at least $(1-\alpha)n$ non-actively selected clients follow the protocol without dropping out, the probability of aborting is at most $2^{-\eta}$ where $\eta$ is the smaller of
\begin{equation*}
    -\log_2(m)+2\log_2(e)((t+1)/\committeesize-(1-\alpha))^2\committeesize - 1
\end{equation*}
and
\begin{equation*}
    -\log_2(h\lceil l/2 \rceil + w \lfloor l/2 \rfloor)+2\log_2(e)((d+1)/\shufflecommitteesize-\alpha)^2\shufflecommitteesize) - 1.
\end{equation*}
\end{theorem}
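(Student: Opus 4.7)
The plan is to mirror the proof of Theorem~\ref{thm:amortizedsecurity} (the amortized shuffler), since the Alternating Shuffler protocol has the same high-level structure: a distributed key generation, a ciphertext submission, one or more rounds of verifiable shuffling via Figure~\ref{Protocol:CTShuffle}, and a distributed decryption. Security is established in the ideal/real simulation paradigm, and correctness is argued by bounding the abort probability. I therefore identify the two bad events that govern each bound and estimate their probabilities by Chernoff plus union bounds.

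For security, I would construct a simulator $\mathsf{Sim}$ that, given the shuffled multiset of plaintexts output by the ideal functionality, invokes the simulators provided by Theorem~\ref{thm:sharedkeyprotocols} for the key-generation and decryption subroutines. In between, $\mathsf{Sim}$ simulates each of the row (or column) shuffling instances by producing fresh re-encryptions together with honestly generated Bayer--Groth zero-knowledge proofs of a uniformly random permutation; by semantic security of ElGamal under DDH and the zero-knowledge property of the proof system, this view is indistinguishable from the real one \emph{provided} that every shuffling committee contains at least one honest, non-dropped-out client and every decryption committee contains fewer than $t$ corrupted clients. The correct plaintext multiset is injected into the view at the final decryption step via the key-homomorphism trick already used in the amortized shuffler.

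Bounding the failure probability of those two events reduces to a direct Chernoff calculation on uniformly random committees. For a random decryption committee of size $\committeesize$ whose expected corruption fraction is $\gamma$, the probability it contains $\geq t$ corrupted clients is at most $2^{-2\log_2(e)(t/\committeesize-\gamma)^2\committeesize}$; a union bound over the $m$ committees, together with an extra factor absorbed into the $-1$, yields the first expression in the theorem. For a random shuffling committee of size $\shufflecommitteesize$ with dropout limit $d$, the probability that no honest non-dropped client ever produces a valid shuffle is at most $2^{-2\log_2(e)(1-d/\shufflecommitteesize-\gamma)^2\shufflecommitteesize}$. The crucial difference from the amortized setting is that the alternating protocol performs many parallel shuffles per round, so the union bound is over $h\lceil l/2\rceil + w\lfloor l/2\rfloor$ shuffling committees, which produces the additional $-\log_2(h\lceil l/2\rceil + w\lfloor l/2\rfloor)$ term in the second expression. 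The correctness bounds are obtained by the symmetric calculation with the adversarial-corruption rate $\gamma$ replaced by the honest-dropout rate $\alpha$ and the thresholds shifted from $t$ and $d$ to $t+1$ and $d+1$, reflecting that abort requires strictly more misbehaviour or dropouts than the protocol can tolerate.

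The main obstacle I anticipate is the hybrid argument across the $l$ iterations of row/column shuffling: the ciphertexts input to round $r+1$ are the outputs of round $r$, so one must order the hybrids carefully (round by round, and within each round committee by committee) to invoke semantic security on freshly re-encrypted ciphertexts at each step rather than on already-simulated ones. Once that hybrid is in place, composing it with the key-generation and decryption simulators of Theorem~\ref{thm:sharedkeyprotocols} is routine, and the two Chernoff-plus-union-bound computations complete the proof.
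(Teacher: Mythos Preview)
Your proposal is correct and follows essentially the same approach as the paper: reduce to the proof of Theorem~\ref{thm:amortizedsecurity}, reuse the decryption-committee bound verbatim, and observe that the only change is that the union bound over shuffling committees now runs over $h\lceil l/2\rceil + w\lfloor l/2\rfloor$ instances rather than one, yielding the extra logarithmic term. The paper's proof is considerably terser---it does not spell out the simulator or the hybrid over iterations that you (rightly) flag---but the underlying argument and the arithmetic are the same.
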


If we require $\sigma=40$ and $\eta=10$ with $n=10,000$ clients sending 128-bit inputs then at most 500 of which drop out and at most 500 of which are malicious then we can achieve a 47 rounds (or 35 if no-one drops out while shuffling) a worst case communication of $26KB$ and an average communication of $7.5KB$. As in the amortized case these are numerically optimized. Based on a count of exponentiations the computation for each shuffling client should take around 0.03 seconds and we note that the $\sqrt{n}$ shuffles happening in parallel needn't progress through rounds in lockstep with each other. Thus if the round trip times are negligible and all clients respond as quickly as one might hope (admittedly quite optimistic assumptions) the whole protocol would take less than two seconds, from this we conclude that client computation is not a problem. The asymptotics implied by the above are given in the following theorem.

\begin{theorem}\label{thm:alternatingasymptotics}
If the server is semi-honest, the fraction of malicious clients $\gamma$ and dropouts bounded by $\alpha$ with $2\gamma + \alpha < 1$, then the alternating protocol, with an appropriate choice of parameters, does the following. For simplicity of expressions we assume $\sigma/(1-\gamma-\alpha)^2 < \sqrt{n}$.
\begin{itemize}
    \item Implements alternating shuffling with statistical security $\sigma$ and correctness parameter $\eta$.
    \item Requires at most $O(\sqrt{n})$ communication from each client.
    \item Requires at most $O(\sqrt{n})$ exponentiations from any client.
    \item Requires $O(l \sigma/(1-\gamma-\alpha)^2)$ rounds.
\end{itemize}
\end{theorem}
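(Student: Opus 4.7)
\medskip

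\noindent\textbf{Proof proposal.} The plan is to derive the four bullets by instantiating \Cref{thm:Alternating} with a carefully chosen set of parameters, and then counting the per-client cost and rounds of the protocol in \Cref{Protocol:Alternating} under those choices. Set $h=w=\sqrt{n}$. Because $2\gamma+\alpha<1$, the open interval $(\gamma,1-\alpha)$ has length $(1-\alpha-\gamma)=\Theta(1-\gamma-\alpha)$, so I will pick the thresholds at its midpoint: $t/\committeesize = d/\shufflecommitteesize = (\gamma+1-\alpha)/2$. With this choice, both ``slacks'' $(t/\committeesize-\gamma)$ and $(1-d/\shufflecommitteesize-\gamma)$, as well as the dual quantities $((t+1)/\committeesize-(1-\alpha))$ and $((d+1)/\shufflecommitteesize-\alpha)$, are $\Theta(1-\gamma-\alpha)$ up to an additive $O(1/\committeesize)$ and $O(1/\shufflecommitteesize)$ respectively (negligible once the committees are large).

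Next, I would size the committees. From the four inequalities in \Cref{thm:Alternating}, the dominant additive loss inside the logarithm is $\log_2(m)=\log_2(n/\committeesize)=O(\log n)$ for the decryption bound and $\log_2(\sqrt{n}\cdot l)=O(\log n + \log l)$ for the shuffling bound. Hence it suffices to take
\[
\committeesize = \Theta\!\left(\frac{\sigma+\eta+\log n}{(1-\gamma-\alpha)^2}\right),\qquad \shufflecommitteesize = \Theta\!\left(\frac{\sigma+\eta+\log n + \log l}{(1-\gamma-\alpha)^2}\right),
\]
so that all four bounds in \Cref{thm:Alternating} exceed $\sigma$ (resp.\ $\eta$). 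Under the standing assumption $\sigma/(1-\gamma-\alpha)^2<\sqrt{n}$ (and absorbing $\log n$ into $\sigma$ in the usual way, or treating $\log n$ as lower order), this gives $\committeesize,\shufflecommitteesize=O(\sqrt{n})$, proving the first bullet.

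For the communication and computation bullets, each client participates in exactly one decryption committee and possibly a bounded number of shuffling slots. Key generation (\Cref{Protocol:KeyAgreement}) costs $O(\committeesize)$ group elements and exponentiations per client. Distributed decryption (\Cref{Protocol:Decryption}) costs $O(|G_i|)=O(n/m)=O(\committeesize)$ per committee member (since the server partitions the $n$ ciphertexts into $m=n/\committeesize$ groups). A shuffling client in \Cref{Protocol:CTShuffle} re-encrypts, permutes, and proves a shuffle on a single row/column of $\sqrt{n}$ ciphertexts, which by the Bayer--Groth proof is $O(\sqrt{n})$ in both communication and exponentiations. Since $\committeesize=O(\sqrt{n})$, all per-client costs are $O(\sqrt{n})$, establishing bullets two and three.

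Finally, for rounds, the key generation and decryption phases each use $O(1)$ rounds. The shuffling phase consists of $l$ iterations, each of which runs $\sqrt{n}$ parallel copies of \Cref{Protocol:CTShuffle}; but within a single copy the shufflers act \emph{sequentially} through at most $\shufflecommitteesize$ members, contributing $O(\shufflecommitteesize)=O(\sigma/(1-\gamma-\alpha)^2)$ rounds (plus the constant-round Fiat--Shamir-compressed shuffle proof). Multiplying by $l$ gives $O(l\sigma/(1-\gamma-\alpha)^2)$ rounds as claimed. The main (routine) obstacle is verifying the inequalities of \Cref{thm:Alternating} are simultaneously satisfied for both security and correctness at the chosen midpoint threshold; this boils down to checking that with $(t+1)/\committeesize$ and $t/\committeesize$ differing by $O(1/\committeesize)$, the quadratic slack $(1-\gamma-\alpha)^2$ dominates, which holds precisely because $\committeesize=\omega(1/(1-\gamma-\alpha))$ under our sizing.
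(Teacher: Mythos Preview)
Your approach is exactly what the paper has in mind: the paper does not spell out a proof of this theorem, merely stating that ``the asymptotics implied by the above are given in the following theorem'' after \Cref{thm:Alternating}, so your derivation (instantiate \Cref{thm:Alternating} with midpoint thresholds, size committees so the four inequalities clear $\sigma$ and $\eta$, then count per-client cost and rounds) is precisely the intended argument.

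There is, however, one concrete slip. The constraints on $t/\committeesize$ and on $d/\shufflecommitteesize$ are \emph{not} the same interval. For $t/\committeesize$ you correctly need $t/\committeesize\in(\gamma,\,1-\alpha)$, with midpoint $(\gamma+1-\alpha)/2$. But for $d/\shufflecommitteesize$ the security bound requires $1-d/\shufflecommitteesize-\gamma>0$ and the correctness bound requires $d/\shufflecommitteesize>\alpha$, i.e.\ $d/\shufflecommitteesize\in(\alpha,\,1-\gamma)$, whose midpoint is $(\alpha+1-\gamma)/2$. With your choice $d/\shufflecommitteesize=(\gamma+1-\alpha)/2$ the slack $1-d/\shufflecommitteesize-\gamma=(1+\alpha-3\gamma)/2$, which can be zero or negative under the hypothesis $2\gamma+\alpha<1$ (take e.g.\ $\gamma=0.4,\alpha=0.1$), so your claim that this slack is $\Theta(1-\gamma-\alpha)$ fails. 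Replacing your $d/\shufflecommitteesize$ by $(\alpha+1-\gamma)/2$ fixes this: then both shuffling slacks become exactly $(1-\gamma-\alpha)/2$ and the rest of your argument goes through unchanged.
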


Taking $l=2$ in the above gives the asymptotics for shuffling each row twice. Taking $l=\sigma$ gives the asymptotics for using this protocol to implement a true shuffle using Theorem 3.6 in Håstad~\cite{Hastad}.

\subsection{Feasibility Study}\label{sec:feasibility}
In this section we discuss the concrete efficiency of our protocol, and its amenability for use in production. We first discuss communication costs, including round complexity, and then turn our attention to client's computation requirements. Then we discuss the end-to-end throughput of our protocols. Throughout the section we assume an elliptic curve based implementation, and in particular use curve Curve25519 in our runtime benchmarks. Therefore, a group element can be represented in 256 bits.

\paragraph{Communication costs.}
Figure~\ref{fig:plots-parameters}(left 2 plots) shows how the number of rounds of our protocols grow with the security parameter $\sigma$, for $\sigma=13$, matching the DP guarantee of Gordon et al., and $\sigma=40$, which is considered a negligible probability in practice when dealing with statistical security. Note that the number of rounds of our protocol varies with the client dropout rate in a given execution. In the plot we show the two extremes: the optimistic number of rounds (when no dropouts happen), and the pessimistic number, where the number of dropouts is as large as the protocol tolerates ($n/20$ in the plot, where $n$ stands for the number of clients) and these drop outs happen in the worst possible manner: a client that is selected for shuffling drops out as it obtains the encrypted data (step~\ref{item:shuffle-amortized} of Figure~\ref{Protocol:Amortized}, and step~\ref{item:client-shuffles} of Figure~\ref{Protocol:CTShuffle}).

Recall that while our protocols might take a few rounds of interaction between clients and the server, not all clients need to interact in every round. In fact only one client does work in most rounds of the Amortized protocol (again step~\ref{item:shuffle-amortized} of Figure~\ref{Protocol:Amortized}, and step~\ref{item:client-shuffles} of Figure~\ref{Protocol:CTShuffle}). Figure~\ref{fig:plots-parameters}(right 2 plots) show
total (including both download and upload) per-client communication of our protocols. Note that for the amortized shuffler, shuffling $10^6$ and $10^4$ values requires worst-case communication of $100MB$ and $1MB$, respectively. For the $2$-round alternating shuffler protocol, sublinear costs really make a difference: $n=10^6$ and $n=10^4$ the worst-case client communication is $15KB$ and $130KB$, respectively. The average client incurs significantly less communication in both the amortized and alternating shuffler. The strange drops in the worst-case communication for the Alternating Shuffler protocol  due to the number of shuffles that need doing going from a little more than the number of clients to a little less, meaning the worst of client suddenly goes from having to do two shuffles to doing only one.

\begin{figure*}
    \centering
    \includegraphics[width=4cm]{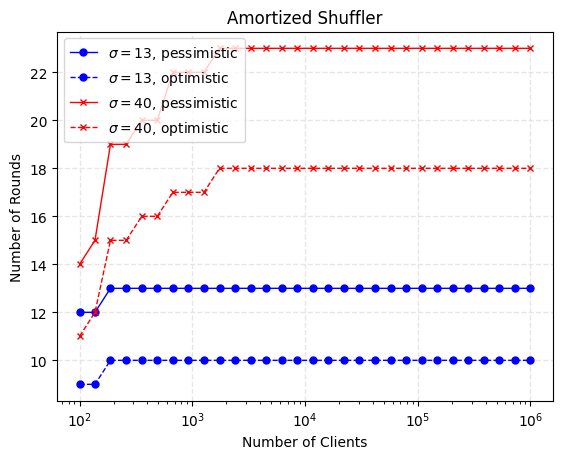}
    \includegraphics[width=4cm]{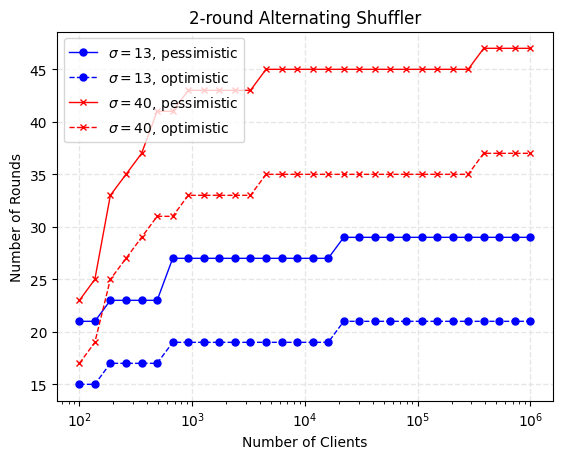}
    \includegraphics[width=4cm]{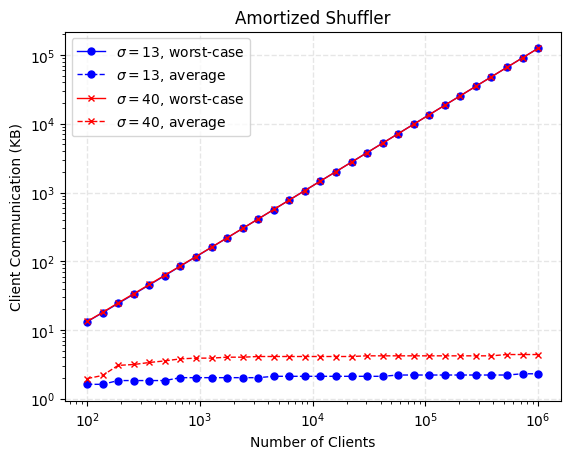}
    \includegraphics[width=4cm]{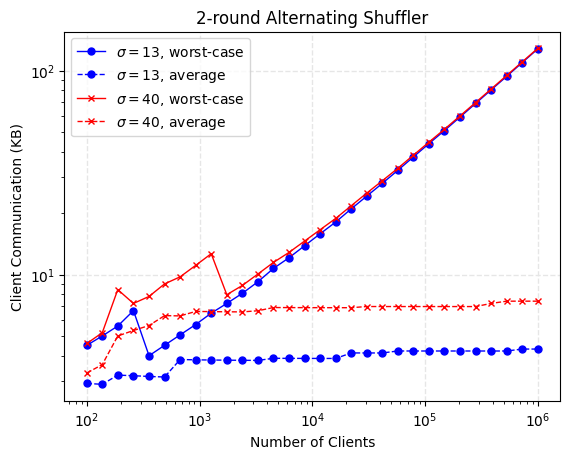}
    \caption{\small (left) Number of rounds as a function of number of clients, for maximum dropout fraction $\alpha = 1/20$, maximum client corruption fraction $\gamma = 1/20$, and correctness parameter $\eta = 10$. (right) Total (download+upload) per-client communication cost (in KB) as n grows, again for $\alpha = 1/20, \gamma = 1/20, \eta = 10$.}
    \label{fig:plots-parameters}
\end{figure*}

\paragraph{Computation costs.}
Next, we characterize client costs in terms of the number of exponentiations that clients need to perform, as these are the most costly operations in our protocol. Table~\ref{tab:exponentiations} shows that count. Clearly the shuffle proof dominates the cost. For reference, 
we run benchmarks for exponentiation 
in the (standard) elliptic curve Curve25519,
using the Dalek-Cryptography framework~\cite{daleklib}, written in Rust. We benchmarked on both a standard laptop, and a Pixel7 device.
A single exponentiation in a Pixel7 phone and a standard laptop take $0.045$ms and $0.0115$ms, respectively.
This means even the worst-case costs in 
Table~\ref{tab:exponentiations} remain under $30$s, even when the client runs in a phone.
Moreover, this estimate is conservative, as it does not account for the fact that the group exponentiations in the Bayer-Groth shuffling proof can be structured as multi-exponentiations, resulting in a significant speed-up via efficient algorithms such as Pippenger's~\cite{Pippenger80}. For instance, in our benchmarking we get $3.4$ and $5.2x$ speedups on a Pixel 7 device for $n = 10^3$ and $n = 10^4$, respectively. In Table~\ref{tab:exponentiations-ms} (Appendix~\ref{sec:appendix-runtimes})
we report the results of our benchmark, both for Pixel 7 and laptop. Concretely, Table~\ref{tab:exponentiations-ms} is analogous to  Table~\ref{tab:exponentiations}, but reporting time in milliseconds.
For amortized shuffling we report timing using Pippenger's algorithm.

\begin{table}
    \centering
    \begin{tabular}{c|cc|cc}
        Number of Clients & \multicolumn{2}{c}{$10^3$} & \multicolumn{2}{c}{$10^5$}\\ \hline
        Key Agreement & \multicolumn{2}{c}{28} & \multicolumn{2}{c}{32} \\
        Decryption & \multicolumn{2}{c}{39} & \multicolumn{2}{c}{43} \\
        & \textsc{Avg} & \textsc{Worst} & \textsc{Avg} & \textsc{Worst} \\\cline{2-5}
        Shuffling (Amortized) & 105 & 6198 & 108 & 600372 \\
        Shuffling (Alternating) & 331 & 544 & 265 & 2086
    \end{tabular}
    \vspace{1cm}
    \caption{\small Exponentiations per client in each part of the protocol . All values are for $\sigma=40$, $\gamma =1/20$, $\alpha=1/20$, $\eta= 10$. Shuffling rows display counts averaged over all clients and for the client with the worst load.}
    \label{tab:exponentiations}
\end{table}

\paragraph{Throughput.}
Finally, we discuss end-to-end-costs of our protocols,
with a focus on client costs.
The more expensive part is the shuffling stage,
where selected clients have to (i) receive
ciphertexts from the server, (ii) shuffle, reencrypt them, and
compute a Bayer-Groth proof of shuffling, and (iii) send 
the ciphertexts and proofs back to the server.
Consider the amortized shuffler (costs are significantly better for $2$ round alternating), and the case of $n = 10^4, \alpha = 1/20, \gamma = 1/20, \eta = 10, \sigma = 40$. As shown in Figure~\ref{fig:plots-parameters}, the total worst-case (download+upload) communication is 1MB, and therefore network cost would stay below 1s, even with a slow (conventional) connection.
As a conservative estimate, computation would take less than 5s, according to Table~\ref{tab:exponentiations-ms} (as we're considering $n=10^4$). Since for $\sigma=40$ the amortized shuffler requires 
$\sim$20 rounds, the total client's time for $n = 10^4$ would be no more than 2 minutes. 

A final remark regarding the different roles clients might take in our protocols is in order. The way our protocols are presented, 
clients always encrypt and decrypt, and possibly act as shufflers.
If we could assume that shufflers are always powerful devices, e.g. desktops, then the end-to-end costs would significantly improve (our benchmarks show a 4x improvement from Pixel 7 to laptop).
Another remarkable aspect of the protocol is that, while the average 
client does not do a lot of work, it has to wait until the end of the protocol for decryption. A conceivable protocol variant would involve clients that do not act as decryptors, and instead only receive 
a public key after the key agreement phase, encrypt their message, and terminate the execution, thus incurring only one round of interaction. In that case, the thread model assumptions regarding
malicious clients and dropouts would be with respect to the 
clients taking on encryption/decryption and shuffling tasks.

\section{Conclusion}
We have shown that by considering simultaneously the concrete means of amplification, e.g.\ uniform shuffling vs approximate shuffling, and their cryptographic implementation in a realistic threat model, we can strike a balance between privacy and computational costs, shedding some light on practical aspects of the "amplification by shuffling" paradigm.

Our results so far leave open some intriguing problems.
Although we do not solve them in this work, we 
discuss them briefly next.

\paragraph{Does Public Randomness and Two Rounds give Strong Amplification?}
In Section~\ref{sec:no-strong-amp} we showed that 
2 rounds of alternate shuffling can't be used to achieve 
strong amplification, that is, it is insufficient to achieve the 
amplification bound of Theorem~\ref{thm:uniform-shuffling}. However, the proof relies on a specific configuration that is, intuitively, unlikely
to happen given the initial
randomization applied by $\pi$, even if $\pi$
is public. Leveraging this fact might be enough to achieve strong amplification without changing our cheapest shuffling protocol.

\paragraph{Is Three Round Alternating DO with Good $(\epsilon, \delta)$?}
Another way to achieve strong amplification with
imperfect shuffling
is DO-shuffling, thanks to the result in~\cite{DBLP:journals/iacr/ZhouSCM22} (DO-shuffling enables strong amplification).
As shown in Section~\ref{sec:not-do}
this can't be achieved using a $2$-round alternating  shuffler. Therefore, whether 
3 rounds of alternating shuffler gives DO shuffling with good $(\epsilon, \delta)$ (it's not hard to show that can be achieved $(0, n^{-1/4})$ and $(n\log n, 0)$)
parameters is an interesting next question to tackle.

\paragraph{Can One Obtain Numerically Tight DP Bounds for Alternating Shuffling Protocols?}
The privacy bounds proved in Section~\ref{sec:weak} are illuminating with regards to the amplification power provided by approximate shuffling in comparison to exact shuffling, but the constants obtained are far from tight. Practical deployments of DP often rely on numerical accountants instead of closed-form expressions for computing tight privacy guarantees (e.g.\ \cite{koskela2023numerical}), in particular if the mechanisms need to access the same data multiple times through composition. Designing numerical accountants for approximate shuffling protocols is an important question for future work.

\ifnum\arxiv=0
    \bibliographystyle{ACM-Reference-Format}
\else
    \section*{Acknowledgements}
    The authors want to thank Kobbi Nissim for stimulating conversations at the early stages of this project.
    \bibliographystyle{abbrvnat}
\fi
\bibliography{main}

\newpage
\appendix
\onecolumn

\section{Proofs from Section~\ref{sec:properties}}

\subsection{Proof of Theorem~\ref{fig:alternatin-shuffler}}\label{app:amplification-calculation}

Let us write the number of clients as $n = h \cdot w$ where $h$ and $w$ correspond, respectively, to the number of rows and columns in the alternating shuffler $\alttwo$. At the end we will set $h = w = \sqrt{n}$, but it is illustrative and convenient to keep a separate notation for the numbers of rows and columns throughout the proof.

Consider two neighboring databases $D, D'$ differing (w.l.o.g.) in the first individual.
We start by introducing a coupling between the executions of the protocol on both databases.
First of all, the coupling will select the same permutations $\pi_2, \ldots, \pi_h$ for all rows except the first one (which contains the differing user) in both executions.
The permutation $\pi_1$ applied in the first row is obtained in two steps: 1) shuffle the users $\{2, \ldots, w\}$ among themselves with a random permutation, and 2) swap the first user with a position chosen u.a.r.\ from $\{1,\ldots, w\}$.
We also couple the first step in this decomposition to use the same permutation on both executions.
Note that the position in the first row where the first user is sent is still chosen independently in both executions, and marginally on each execution the permutation of the first row is still uniform.

Let $C_1, \ldots, C_k \in Y^w$ (resp.\ $C_1', \ldots, C_w'$) be the outputs of the protocol on input $D$ (resp.\ $D'$) organized by column (i.e.\ $C_i$ contains the messages in the $i$th column after applying $\alttwo$).
Let $I, I' \in [w]$ denote the random variables indicating the columns where the first user is sent to by the permutation of the first row in both executions. These random variables are independent and uniform.

Now let $C_{<i}$ represent the output of the first $i-1$ columns.
Our proof strategy relies on analyzing the privacy loss incurred by sequentially revealing each column conditioned on the output of the previous columns, $C_i | C_{<i} = c_{<i}$, and then applying a standard (adaptive) composition argument.
To analyze the privacy of each columns we combine privacy amplification by sampling and by shuffling.

First of all, we observe that for any $i$ and any fixed output $c_{<i}$ we have
\begin{align}
    \Pr[I = i | C_{<i} = c_{<i}] \leq \frac{e^{2\epsilon_0}}{e^{2\epsilon_0} + w - 1} := \gamma \enspace.
    \label{eqn:posterior}
\end{align}
Indeed, since a priori $\Pr[I=i] = 1/w$, by Bayes' rule we have
\begin{align*}
    \Pr[I = i | C_{<i} = c_{<i}]
    &=
    \frac{\Pr[C_{<i} = c_{<i}, I=i]}{\Pr[C_{<i} = c_{<i}]}
    \\
    &=
    \frac{\Pr[C_{<i} = c_{<i}, I=i]}{\sum_j \Pr[C_{<i} = c_{<i}, I=j]}
    \\
    &=
    \frac{1}{1 + \sum_{j \neq i} \frac{\Pr[C_{<i} = c_{<i}, I=j]}{\Pr[C_{<i} = c_{<i}, I=i]}}
    \\
    &=
    \frac{1}{1 + \sum_{j \neq i} \frac{\Pr[C_{<i} = c_{<i} | I=j]}{\Pr[C_{<i} = c_{<i} | I=i]}}
    \enspace.
\end{align*}
Now let $A_{|i}$ denote the random assignments $A = (A_1, \ldots, A_n) \in [w]^n$ indicating the column where each user's message is sent to by the first step of the alternating shuffler conditioned on the first user sending their message to the $i$th column. Note that we can couple $A_{|i}$ and $A_{|j}$ in such a way that they differ by exactly two assignments: first sample $A_{|i} = (A_1, \ldots, A_n)$ with $A_1 = i$, and then obtain $A_{|j} = (A_1', \ldots, A_n')$ by taking $A_1' = j$ and letting $A_u' = i$ for $u$ sampled uniformly at random from $\{ u : A_u = j \}$.
Using this coupling we can bound $p := \Pr[C_{<i} = c_{<i} | I=i]$ as:
\begin{align*}
    p
    &=
    \sum_{a} \Pr[C_{<i} = c_{<i} | I=i, A=a] \Pr[A=a | I=i]
    \\
    &=
    \sum_{a} \Pr[C_{<i} = c_{<i} | A=a] \Pr[A_{|i}=a]
    \\
    &=
    \sum_{a, b} \Pr[C_{<i} = c_{<i} | A=a] \Pr[A_{|i}=a, A_{|j}=b]
    \\
    &\leq
    e^{2 \epsilon_0} \sum_{a, b} \Pr[C_{<i} = c_{<i} | A=a'] \Pr[A_{|i}=a, A_{|j}=b]
    \\
    &=
    e^{2 \epsilon_0} \sum_{b} \Pr[C_{<i} = c_{<i} | A=a'] \Pr[A_{|j}=b]
    \\
    &=
    e^{2 \epsilon_0} \Pr[C_{<i} = c_{<i} | I=j] \enspace,
\end{align*}
where we used that $C_{<i} | A=a$ and $C_{<i} | A=b$ represent the output of $w (i-1)$ copies of the $\epsilon_0$-DP local randomizer $R$ applied to a database differing in at most two positions. Thus we obtain \eqref{eqn:posterior}.
The same argument shows the same holds for the other execution: $\Pr[I' = i | C'_{<i} = c_{<i}] \leq \gamma$.

Now we consider, for fixed $c_{<i}$, four distributions associated with the outputs of column $i$ on both executions conditioned on observing the same output from the previous columns.
We split the cases where the first user is sent to column $i$ or not:
\begin{align*}
    {M_i}^{\mathrm{in}} &:= C_i | C_{<i} = c_{<i}, I = i \enspace, \\
    {M_i}^{\mathrm{out}} &:= C_i | C_{<i} = c_{<i}, I \neq i \enspace, \\
    {M'_i}^{\mathrm{in}} &:= C'_i | C'_{<i} = c_{<i}, I' = i \enspace, \\
    {M'_i}^{\mathrm{out}} &:= C'_i | C'_{<i} = c_{<i}, I' \neq i \enspace. \\
\end{align*}
Because of the couplings between the permutations used in both executions, ${M_i}^{\mathrm{out}}$ and ${M'_i}^{\mathrm{out}}$ follow the same distribution since the element that differs between both datasets is not sent to the $i$th column under these conditionings.
On the other hand, ${M_i}^{\mathrm{in}}$ and ${M'_i}^{\mathrm{in}}$ are (again by the coupling condition) the results of applying a uniform shuffling to two databases of size $h$ differing in a single individual. Thus, by Theorem~\ref{thm:uniform-shuffling} we have the indistinguishability relation ${M_i}^{\mathrm{in}} \simeq_{\epsilon_S, \delta} {M'_i}^{\mathrm{in}}$ with $\epsilon_S := \epsilon_{\mathrm{clones}}(\epsilon_0, \delta, h)$.
Finally, leveraging the coupling used above to obtain \eqref{eqn:posterior} also obtain that $M_i^{\mathrm{in}}$ and $M_i^{\mathrm{out}}$ can be seen as two applications of a uniform shuffler to databases of size $h$ differing in a single individual. Therefore, we have ${M_i}^{\mathrm{in}} \simeq_{\epsilon_S, \delta} {M_i}^{\mathrm{out}}$ as well.

To obtain the privacy loss associated with the $i$th column we can now combine the facts above with Lemma~\ref{lem:sampling}.
In particular, using \eqref{eqn:posterior} and taking $p$ to be the distribution of $M_i^{\mathrm{in}}$, $q$ to be the distribution of ${M'_i}^{\mathrm{in}}$ and $r$ to be the distribution of ${M_i}^{\mathrm{out}}$, we obtain that $C_i | C_{<i} = c_{<i} \simeq_{\epsilon_C, \gamma \delta} C'_i | C'_{<i} = c_{<i}$ with
$\epsilon_C := \epsilon_{\mathrm{sampling}}(\epsilon_S, \gamma)$.

Finally, seeing the output of the mechanism as the adaptive composition of the $w$ different columns, we can use the advanced composition theorem \cite{DBLP:journals/tit/KairouzOV17} to obtain that $\alttwo$ applied to $R$ yields an $(\epsilon, w \gamma \delta + \delta')$-DP mechanism with
\begin{align*}
    \epsilon = \epsilon_C \left(\sqrt{2 w \log(1/\delta')} + k \frac{e^{\epsilon_C} - 1}{e^{\epsilon_C} + 1} \right) \enspace.
\end{align*}

Plugging in all the quantities derived above we obtain:
\begin{align*}
\epsilon
&=
    \epsilon_C \cdot \left(\sqrt{2 w \log(1/\delta')} + w \frac{e^{\epsilon_C} - 1}{e^{\epsilon_C} + 1} \right)
\\
&=
    \log\left(
    1
    +
    \gamma (e^{\epsilon_0}-1)
    \left(
    \sqrt{
    \frac{32 \log(4/\delta)}{(e^{\epsilon_0}+1) h}
    }
    +
    \frac{4}{h}
    \right)
    \right)
    \cdot
    \left(\sqrt{2 w \log(1/\delta')}
    +
    w \frac{\gamma (e^{\epsilon_0}-1)
    \left(
    \sqrt{
    \frac{32 \log(4/\delta)}{(e^{\epsilon_0}+1) h}
    }
    +
    \frac{4}{h}
    \right)}{2 + \gamma (e^{\epsilon_0}-1)
    \left(
    \sqrt{
    \frac{32 \log(4/\delta)}{(e^{\epsilon_0}+1) h}
    }
    +
    \frac{4}{h}
    \right)
    } \right)
\\
&=
    \log\left(
    1
    +
    \frac{e^{2\epsilon_0} (e^{\epsilon_0}-1)}{e^{2\epsilon_0} + w - 1}
    \left(
    \sqrt{
    \frac{32 \log(4/\delta)}{(e^{\epsilon_0}+1) h}
    }
    +
    \frac{4}{h}
    \right)
    \right)
    \cdot
    \left(\sqrt{2 w \log(1/\delta')}
    +
    w \frac{\frac{e^{2\epsilon_0} (e^{\epsilon_0}-1)}{e^{2\epsilon_0} + w - 1}
    \left(
    \sqrt{
    \frac{32 \log(4/\delta)}{(e^{\epsilon_0}+1) h}
    }
    +
    \frac{4}{h}
    \right)}{2 + \frac{e^{2\epsilon_0} (e^{\epsilon_0}-1)}{e^{2\epsilon_0} + w - 1}
    \left(
    \sqrt{
    \frac{32 \log(4/\delta)}{(e^{\epsilon_0}+1) h}
    }
    +
    \frac{4}{h}
    \right)
    } \right)
\\
&\leq
    \underbrace{
    \left(
    \frac{e^{2\epsilon_0} (e^{\epsilon_0}-1)}{e^{2\epsilon_0} + w - 1}
    \left(
    \sqrt{
    \frac{32 \log(4/\delta)}{(e^{\epsilon_0}+1) h}
    }
    +
    \frac{4}{h}
    \right)
    \right)
    }_{A}
    \cdot
    \underbrace{
    \left(\sqrt{2 w \log(1/\delta')}
    +
    w \frac{e^{2\epsilon_0} (e^{\epsilon_0}-1)}{e^{2\epsilon_0} + w - 1}
    \left(
    \sqrt{
    \frac{8 \log(4/\delta)}{(e^{\epsilon_0}+1) h}
    }
    +
    \frac{2}{h}
    \right)
    \right)
    }_{B}
\end{align*}
Ignoring the constants in these expressions, we see that
\begin{align*}
    A
    &=
    O\left(
    \frac{e^{3\epsilon_0} \sqrt{\log(1/\delta)}}{(e^{2\epsilon_0} + w) \min\{h, \sqrt{e^{\epsilon_0} h}\}}
    \right) \enspace,
    \\
    B
    &=
    O\left(
    \max\left\{
    \sqrt{w \log(1/\delta')},
    \frac{e^{3 \epsilon_0}}{e^{2\epsilon_0} + w} \max\left\{\frac{w}{h}, \sqrt{\frac{w^2 \log(1/\delta)}{h e^{\epsilon_0}}}\right\}
    \right\}
    \right) \enspace.
\end{align*}
In particular, assuming $k = h = w$ is large, $\epsilon_0 = O(1)$, and taking $\delta = \delta'$ we obtain
\begin{align*}
    \epsilon \leq 
    A \cdot B
    =
    O\left(
    \frac{e^{3\epsilon_0} \sqrt{\log(1/\delta)}}{(e^{2\epsilon_0} + k) \sqrt{e^{\epsilon_0} k}}
    \right)
    \cdot
    O\left(
    \sqrt{k \log(1/\delta)}
    \right)
    =
    O\left(
    \frac{e^{5\epsilon_0/2} \log(1/\delta)}{k}
    \right) \enspace.
\end{align*}
Recalling $k = \sqrt{n}$ completes the proof.

\subsection{Proof of Theorem~\ref{thm:weak-amplification-corrupted}}\label{app:proof-weak-amplification-corrupted}

Building on the proof of \Cref{thm:weak-amplification}, we can use a similar argument modified to exclude a small set of columns and rows.
In particular, we shall exclude columns that contain a corrupted user in the row where the target user is assigned: since the pre-protocol permutation is public, the corrupted users in this row can collude with the server to identify columns where the target user cannot be sent to, thus reducing the amount of amplification obtained from choosing a column at random.
Let us call the remaining columns \emph{valid}.
We shall also exclude the corrupted users from each valid column since these users can collude with the server to be excluded from the result of shuffling that column, reducing the amount of amplification obtained in that step.
Concentration bounds on the number and size of valid columns are easy to obtain as follows.
Recall that we denote the number of rows by $h$ and the number of columns by $w$.
The number of corrupted clients in the row where the target user is follows a Binomial distribution $\Bin(w-1, \gamma)$. Thus, from a tail bound we see that the row of the target user will, with probability at least $1 - \delta_w$, contain at least $w' = (1-\gamma)w - \log(1/\delta_w) - \sqrt{2 \gamma w \log(1/\delta_w)}$ honest users.
Furthermore, a tail and union bound show that, with probability at least $1 - \delta_h$, each column where an honest user occurs in the row of the target users contains at least $h' = (1-\gamma) h - \log(w/\delta_h) - \sqrt{2 \gamma h \log(w/\delta_h)}$ honest users.
The rest of the proof proceeds by replacing $w$ and $h$ with $w'$ and $h'$ in the proof of \Cref{thm:weak-amplification}, and adding a small probability of failure $\delta_w + \delta_h = 2\delta$ to the final privacy guarantee.

\subsection{Proof of Theorem~\ref{thm:no-strong}}

Let $n=k^2$ be the number of clients. Consider the neighboring databases $D_0,D_1$ with $x_1=0,1$ respectively, $x_2=...=x_k=1$ and $x_{k+1}=...=x_n=0$. Let $R(x)=1-x$ with probability $1/(1+e^{\epsilon_0})$ and $x$ otherwise.
Let $\vec{y}=\alttwo(R(\vec{x}))$.
Let $E$ be the event that $\exists j\leq k$ such that $y_j=y_{j+k}=...=y_{j+(k-1)k}=0$ i.e. there is an all zero column in the output.

For $E$ to happen on input $D_0$ it suffices for $R(x_1)=0$ and for the $k-1$ other inputs that are shuffled to the same column as $R(x_1)$ to also be mapped to zero. As all of these inputs are zero the probability of this happening is $(1-1/(1+e^{\epsilon_0}))^{-k}$ which is at least $1-k/(1+e^{\epsilon_0})$.

On the other hand, if $D_1$ is the input given then $E$ can't occur unless one of the first $k$ inputs is mapped to zero and as they are all one this probability is bounded (via union bound) by $k/(1+e^{\epsilon_0})$.

If $e^{\epsilon_0}$ is $\omega(k)$ then $P_{D_0}(E)=1-o(1)$ and $P_{D_1}(E)=o(1)$ and thus this mechanism isn't $(\epsilon,\delta)$-DP for any constant $\epsilon$ and $\delta$ bounded away from one.

\subsection{Proof of Theorem~\ref{theorem:ikos-all-honest-clients}}

Consider two arbitrary inputs $\vec{x}, \vec{x'}\in \G^n$ adding up to the same quantity. We  show that 
$\TV(\ikosview(\vec{x}), \ikosview(\vec{x'}))\leq 2^{-\sigma}$, for the value of $\sigma$ in the statement of the theorem.

Let $(\mat{M}^{(i)})_{i\in [m]}$ (resp. $(\mat{M'}^{(i)})_{i\in [m]}$) be $\sqrt{n}\times\sqrt{n}$ matrices corresponding to the state of the $m$ instances of
$\altshufflertwo{\pi, r_i}$ after applying the public permutation $\pi$, but {\em before shuffling}, to input $\tup{x}$ (resp. $\tup{x'}$).
Recall that all instances
$\altshufflertwo{\pi, r_i}$ use the same public randomness.
This means that all messages from user $j$
appear in the same row in matrices $\mat{M}^{(1)}, \ldots, \mat{M}^{(m)}$. We can thus think of the row shuffling step as
an application of IKOS. More concretely,
let $\mat{R}^{(1)}, \ldots, \mat{R}^{(m)}$
be the result of applying the row shuffling step
to each $\mat{M}^{(i)}$.
We can identify $\mat{R}^{(1)}, \ldots, \mat{R}^{(m)}$
with the result of $\sqrt{n}$ independent runs of IKOS with uniform shuffling, one for each row of $\mat{M}^{(1)} | \ldots | \mat{M}^{(m)}$, each involving $\sqrt{n}$ clients and $m$ messages. It then follows from Corollary 6.1 in~\cite{BalleBGN20} that
\begin{align*}
\TV&(\ikosview(\vec{x}), \ikosview(\vec{x'})) \leq 2^{-\sigma_1}\sqrt{n}~+\\ &\TV(\shuffleCols((\rv{S}^{(i)})_{i\in [m]}), \shuffleCols((\rv{S'}^{(i)})_{i\in [m]}))
\end{align*}
where $\sigma_1 := (m-1)(\log_2(n)/2 - \log_2(e)) - \log_2(q)$,
$\shuffleCols(.)$ applies independent random permutations to the columns of
the input matrices, 
and  $\rv{S}^{(i)}$ (resp. $\rv{S'}^{(i)}$) are uniformly random $\sqrt{n}\times\sqrt{n}$
matrices with elements in $\G$, conditioned on the sum of all the values in the $j$ row (across all $\mat{S}^{(i)}$'s) adding up to the same
amount than the corresponding row in the $\mat{M}^{(i)}$'s (resp. $\mat{M'}^{(i)}$'s), i.e.
$\forall j\in[\sqrt{n}]: \sum_{i,k} \rv{S}^{(i)}_{j, k} = \sum_{i,k} \mat{M}^{(i)}_{j, k}$. Note that the entries in $(\rv{S}^{(i)})_{i\in[m]}$ and $(\rv{S'}^{(i)})_{i\in[m]}$ add up to the same value by definition of $\tup{x}, \tup{x'}$.
Let us now consider the columns of $\rv{S}^{(i)}, \rv{S'}^{(i)}$. We can identify the application of $\shuffleCols$ with a run of IKOS (with uniform shuffling) involving $\sqrt{n}$ clients and $m\sqrt{n}$ messages per client. Then by Corollary 6.1 in~\cite{BalleBGN20} 
we have 
\begin{align*}
\TV&(\ikosview(\vec{x}), \ikosview(\vec{x'})) \leq 2^{-\sigma_1}\sqrt{n}~+\\
&2^{-\sigma_2}\leq 2^{-\sigma_1}(1+\sqrt{n}) < 2^{-\sigma_1+\lceil\log_2(1+\sqrt{n})\rceil} < 2^{-\sigma}
\end{align*}
with 
$\sigma_2 := (m\sqrt{n}-1)(\log_2(n)/2 - \log_2(e)) - \log_2(q)$ and $\sigma$ defined as in the statement of the theorem.

\subsection{Proof of Theorem~\ref{thm:ikos-corrupted}}

Let $(\mat{M}^{(i)})_{i\in [m]}$ (resp. $(\mat{M'}^{(i)})_{i\in [m]}$) be $\sqrt{n}\times\sqrt{n}$ matrices corresponding to the state of the $m$ instances of
$\altshufflertwo{\pi, r_i}$, after applying the public permutation $\pi$, but {\em before shuffling}, to input $\tup{x}$ (resp. $\tup{x'}$).
For each $\mat{M}^{(i)}$, let $\rv{X_j}$ be
the number of 
honest clients whose input is placed in 
row $j$.
Let us call $\alpha$ the fraction of honest clients in the population
and note that $\rv{X_j}$ is a
hypergeometric
random variable with parameters 
$n, \alpha n, \sqrt{n}$
for population size (number of clients),
number of ``marked'' elements (number of honest clients),
and sample size (number of clients contributing to row $j$), respectively.
Note that by the assumption of the theorem $\alpha > (1-\gamma)n$
and thus we can establish a high probability lower bound $\ell$ on 
$\max_{j} \rv{X_j}$. More concretely, it follows from a tail bound
on $\rv{X_j}$ and a union bound over all $j\in [n]$ that
$\max_{j} \rv{X_j} \geq \ell$ for $\ell = (1-\gamma)\sqrt{n} - (\sigma_1 + \log n)^{1/2}n^{1/4}$,
except with probability less than $2^{-\sigma_1}$. Note that this probability is over the public randomness
$\pi$ used to assign clients rows of $\mat{M}^{(i)}$, and common to all shufflers.
We can thus assume that every every row shuffle in the execution of 
$m$ instances of $\altshufflertwo{\pi, r_i}$ involves at least $\ell$ shares from honest 
users (and thus unknown to the server).
We can then consider the $\ell\times\ell$ submatrices of $(\mat{M}^{(i)})_{i\in [m]}$ (resp. $(\mat{M'}^{(i)})_{i\in [m]}$) involving the $\ell$ honest users per row only, and repeat the argument of the proof of Theorem~\ref{theorem:ikos-all-honest-clients}.

\subsection{Proof of Theorem~\ref{thm:no-do}}

Consider the possibility that all $n$ inputs are different. After the public shuffle we can assume wlog that one of the two inputs to be swapped is in the first position.
If the second is in the first $k$ then they will be shuffled next giving perfect indistinguishability, therefore we concentrate on the probability $n(n-1)/(n^2-1)$ event that it is not.
We can then assume that the second swapped input is in the $k+1$th position wlog.
We are then left to show that applying $\alttwo$ to each of
\begin{equation*}
\begin{pmatrix}
1 & 2 & \cdots & k \\
k+1 & k+2 &\cdots & 2k \\
\vdots & & &\vdots \\
\end{pmatrix}
\textrm{ and }
\begin{pmatrix}
k+1 & 2 & \cdots & k \\
1 & k+2 & \cdots & 2k \\
\vdots & & & \vdots \\
\end{pmatrix}
\end{equation*}
doesn't render then indistinguishable.

Note that the entry in the top left can't end up in the same column as $2,...,k$ which must also each be in a separate column.
Thus by looking at the locations of $2,...,k$ the adversary can work out which column the the top left entry ended up in.
If whichever of $1$ and $k+1$ weren't in the top left corner is not also in that column, then the adversary can tell which one that is i.e. the adversary can distinguish the two cases with certainty.

The probability of $1$ and $k+1$ not ending up in the same column is $n-1/n$.
Multiplying this by the probability of the earlier event we conditioned on gives $(n-1)^2/(n^2-1)=(n-1)/(n+1)$ as the adversaries probability of distinguishing with certainty.
The conclusion is immediate.

\section{Client Runtime Benchmarks}
\label{sec:appendix-runtimes}

Table~\ref{tab:exponentiations-ms} is analogous to  Table~\ref{tab:exponentiations}, but reporting time in milliseconds,
for both a Pixel7 device and a standard laptop.
As the underlying group we used the (standard) elliptic curve Curve25519, and benchmarks were done with the Dalek-Cryptography framework~\cite{daleklib}, written in Rust. We benchmarked on both a standard laptop, and a Pixel7 device.
A single exponentiation in a Pixel7 phone and a standard laptop take $0.045$ms and $0.0115$ms, respectively.
For amortized shuffling we report timing using Pippenger's algorithm.

\begin{table*}
    \centering
    \begin{tabular}{c|cc|cc}
        Number of Clients & \multicolumn{2}{c}{$10^3$} & \multicolumn{2}{c}{$10^5$}\\ \hline
        Key Agreement & \multicolumn{2}{c}{1.28/0.32} & \multicolumn{2}{c}{1.47/0.37} \\
        Decryption & \multicolumn{2}{c}{1.79/0.45} & \multicolumn{2}{c}{1.97/0.50} \\
        & \textsc{Avg} & \textsc{Worst} & \textsc{Avg} & \textsc{Worst} \\\cline{2-5}
        Shuffling (Amortized) & 4.83/1.22 & 78.2/18.86 & 4.96/1.24 & 5230/3077 \\
        Shuffling (Alternating) & 24.97/6.27 & 15.18/3.81 & 12.13/3.05 & 95.74/24.07
    \end{tabular}
    \vspace{1cm}
    \caption{\small Runtime (in ms.) of exponentiations per client in each part of the protocol. All values are for $\sigma=40$, $\gamma =1/20$, $\alpha=1/20$, $\eta= 10$. Shuffling rows display counts averaged over all clients and for the client with the worst load. Entries of the form X/Y corresponds to measurements in a Pixel7 device and a standard laptop. For Shuffling(Alternating) - worst we benchmark the accelerated variant via efficient multi-exponentations.}
    \label{tab:exponentiations-ms}
\end{table*}

\section{Proofs for Section~\ref{sec:implementations}}\label{sec:appendix-implementations}

\begin{proof}[Proof of Theorem~\ref{thm:sharedkeyprotocols}]
We will start with the properties for Figure~\ref{Protocol:KeyAgreement}.
The commitments sent by $c_{i,j}$ in step 3, even if they are group elements generated in an arbitrary malicious way, are consistent with a unique $P_{i,j}$ and $Q_{i,j}$ that have the same constant term which we will call $s_{i,j}$ (as that is what it will be for the honest clients). The shares that $c_{i,j}$ sends to the honest clients must be evaluations of this $P_{i,j}$ and $Q_{i,j}$ at the appropriate point, otherwise they will be reported in step 4 and dropped in step 5 (along with all their remaining influence on the protocol).

Therefore at step 6 each honest client will hold a share of $s_{i,j}$ for each $c_{i,j}$ who has not been dropped. By linearity of Shamir secret sharing these are added to a sharing of one secret for that committee. This is uniformly random from the adversaries perspective so long as at least one honest clients submitted shares that were accepted and no more than $t-1$ of these new shares are held by malicious clients (within each committee).

Each committee now has an independent secret shared amongst them, but crucially that same secret shared amongst the next committee. Learning the difference between each the secrets of consecutive committees leaks nothing about the first committees secret. However it does allow the server to tell each committee the offset of their secret from party ones secret. Any dishonest reporting by malicious client in step 6 is caught in step 7. All the honest parties then subtract their offset so they have shares of the first committees secret and output them as required. The malicious parties can output whatever they want.

The only way this Key Agreement protocol could fail (with a semi-honest server) is if the server didn't receive enough honest responses to reconstruct the offsets. This won't happen so long as at least $t$ clients follow the protocol (without dropping out).

Turning to Figure~\ref{Protocol:Decryption}. We can be confident that the clients don't learn anything other than the leakage because the only other data they receive are the uniformly random challenges.
It remains to show that the checks in step 5 constitute a check of a valid proof that the $v_{i,j}$ have been validly constructed and that this proof is zero-knowledge. 

It is zero knowledge because the only (non-leakage) input dependant message sent to $\server$ is $e_j$ in step 4 and this message is masked by $r_j$ which is still pseudorandom from the perspective of $\server$ by the decisional Diffie-Hellman assumption even after being given all the other messages about it i.e. $A_j$ and the $B_{i,j}$.

To see that it is a proof, the server can define $r_j$ to be the logarithm in the base $g$ of $A_j$ (which for an honest client it would be), then the only way to pass step 5a is for the client to set $e_j=r_j+u_j\sk_{a,j}$. For each fixed value of $i$ the client must (in order to pass 5b) have submitted a $B_{i,j}$ and a $v_{i,j}$ satisfying the condition in 5b, but they must be submitted before seeing $u_j$ and any values other than the honest ones will only satisfy this equality for one possible value of $u_j$. The probability of the server choosing that $u_j$ is negligible and we are done.
\end{proof}

\begin{proof}[Proof of Theorem~\ref{thm:amortizedsecurity}]
As $\server$ is semi-honest and our three building blocks enforce that all client behaviour is verified except the submission of inputs we can assume the protocol is followed apart from when the clients are providing input. We must verify two things: firstly that the conditions for the building blocks hold, i.e. there are no more than $t-1$ clients in each decryption committee, and secondly that at least one of the shufflers is honest, so that the permutation is uniformly random. The leakage of the decryption protocol is not a problem as the $h_i$ sent to the clients are potentially known to the adversary anyway (they could have been sent by a malicious party in the previous round) and the values of $h_i^{\sk_{a,j}}$ are a one way function of the secret keys.

The number of corrupt clients in the first committee is at most a hypergeometric random variable with parameters $n,\committeesize$ and $\gamma n$. A tail-bound gives that this will be at most $t-1$ with probability $1-\exp(-2(t/\committeesize-\gamma)^2\committeesize)$. The probability of any committee having too many bad clients is, by a union bound, at most $m\exp(-2(t/\committeesize-\gamma)^2\committeesize)$, thus at most $2^{-\sigma-1}$.

In order to have one honest non-dropped shuffler we require that there are at most $\shufflecommitteesize-d-1$ malicious clients amongst the selected $\shufflecommitteesize$. That is that a hypergeometric variable with parameters $n,\shufflecommitteesize$ and $\gamma n$ is not more than $\shufflecommitteesize-d-1$. A tail-bound gives that this bad event has probability at most $\exp(-2(1-d/\shufflecommitteesize-\gamma)^2\shufflecommitteesize)$, thus at most $2^{-\sigma-1}$.

By a union bound we get the required $2^{-\sigma}$ probability of the bad event.

The argument bounding the probability of an abort is similar. We merely require that each decryption committee has at least $t$ protocol following clients and at least $\shufflecommitteesize-d$ protocol following clients are selected to shuffle. The probability of the former failing to be true is bounded by $m\exp(-2((t+1)/\committeesize-(1-\alpha))^2\committeesize)$ and the latter is bounded by $\exp(-2((d+1)/\shufflecommitteesize-delta)^2\shufflecommitteesize)$, giving the required bounds.
\end{proof}

\begin{proof}[Proof of Theorem~\ref{thm:Alternating}]
This proof is much the same as the proof of Theorem~\ref{thm:amortizedsecurity}. The decryption committees are the same as in that proof so we again have the $-\log_2(m)+2log_2(e)(t/\committeesize-\gamma)^2\committeesize$ term in $\sigma$. Regarding the shuffling committees, as a worst case we can assume a separate committee is drawn for each run of the ciphertext shuffle. There are $h\lceil l/2 \rceil + w \lfloor l/2 \rfloor$ such committees so we need only subtract the logarithm of this from the expression from the amortized case.

Similarly the correctness parameter $\eta$ is the same as in the amortized case except for the subtraction of a $\log_2(h\lceil l/2 \rceil + w \lfloor l/2 \rfloor)$ term from the shuffle committee component.
\end{proof}

\end{document}